\theoremstyle{plain}
\newtheorem{lemma}{Lemma}
\newtheorem{theorem}{Theorem}
\newtheorem{corollary}{Corollary}[theorem]
\theoremstyle{definition}
\newtheorem{remark}{Remark}
\def\d#1{{#1\kern-0.4em\char"16\kern-0.1em}}
\def\D#1{{\raise0.2ex\hbox{-}\kern-0.4em #1}}
\begin{document}
\allowdisplaybreaks
\title[UNIVERSAL SAMPLING TRUNCATION ERROR UPPER BOUNDS]{UNIVERSAL TRUNCATION ERROR UPPER BOUNDS IN SAMPLING RESTORATION\footnote{ This is an Author's Accepted Manuscript of an article published in the Georgian Mathematical Journal. Vol.~17, No. 4. (2010), 765--786. The final publication is available at De Gruyter.  DOI: 10.1515/gmj.2010.033}}
\author{Andriy Ya. Olenko}
\address{ Department of Mathematics and Statistics, La Trobe University, Victoria 3086, Australia}
\email{a.olenko@latrobe.edu.au}

\author{Tibor K. Pog\'any}
\address{Faculty of Maritime Studies, University of Rijeka,
51000 Rijeka, Studentska 2, Croatia}
\email{\tt poganj@brod.pfri.hr}

\subjclass[2000]{Primary: 94A20, 41A25; Secondary: 41A05, 41A17, 41A80, 26D15.}
\keywords{Whittaker--Kotel'nikov--Shannon sampling restoration
formula, approximation/in\-terpolation error level, Plancherel--P\'olya inequality, Bernstein function class, regular
sampling theorem, truncation error upper bound,  multidimensional
sampling, sinc functions, incomplete Lambda function.}

\begin{abstract} Universal (pointwise uniform and time shifted) truncation error upper bounds
are presented for the Whittaker--Kotel'nikov--Shannon (WKS) sampling restoration sum for Bernstein
function classes $B_{\pi,d}^q,\, q>1,\, d\in \mathbb N$, when the decay rate of the sampled functions is unknown.
The case of regular sampling is discussed. Extremal properties of related series of sinc functions are investigated.
\end{abstract}

\maketitle

\section{\bf Introduction}
Let $\mathsf X$ be a normed space and assume that the structure of $\mathsf X$ admits the sampling restoration
procedure
  \begin{equation} \label{1}
      f(\mathbf{x}) = \sum_{\mathbf{n}\in \mathbb Z^d} f(t_\mathbf{n})S(\mathbf{x},t_\mathbf{n}) \qquad
      \big( f\in \mathsf X\big)
  \end{equation}
where $\mathbf{x} \in \mathbb R^d$, $\mathfrak T :=
\{t_\mathbf{n}\}_{\mathbf{n} \in \mathbb Z^d} \subset \mathbb R^d$ is the set of sampling points and $S(\cdot,\cdot)$
is the sampling function. This formula is one of the basic tools in signal processing.

In direct numerical implementations we consider the truncated variant of (\ref{1}), that is
   \[ Y_{\mathfrak J}(f;\mathbf{x}) = \sum_{\mathbf{n}\in \mathfrak J} f(t_\mathbf{n})S(\mathbf{x},t_\mathbf{n})
                                      \qquad \big( \mathfrak J \subset \mathbb Z^d\big)\, , \]
where the index set $\mathfrak J$ is necessarily finite in applications. Namely, restoring the continuous signal from
discrete samples or assessing the information lost in the sampling process are the fundamental problems in sampling
and interpolation theory.

The usual procedure is to estimate the truncation error
   \begin{equation} \label{3}
      \| T_{\mathfrak J}(f;\mathbf{x})\| :=  \|f(\mathbf{x}) -
          Y_{\mathfrak J}(f;\mathbf{x})\| \le \varphi_{\mathfrak J}(f;\mathbf{x}),
   \end{equation}
where $\| \cdot\|$ could denote any suitable norm and $\varphi_{\mathfrak J}$ denotes the truncation error upper bound.
Simple truncation error upper bounds are the main tools in numerical implementations, when they do not contain infinite
products and/or unknown function values. However, a sharp truncation error upper bound enables pointwise, almost sure,
uniform {\em etc.} convergence of the approximating sequence $Y_{\mathfrak J}(f;\mathbf{x})$ to the initial $f \in \mathsf X$ when
$|\mathfrak J| \to \infty$.

In their recent article \cite{OP2} the authors established sharp upper bounds for interpolation remainders of the multidimensional
Paley--Wiener class functions in the $L^2$-space, in the case of finite, regular (equidistantly sampled) WKS sampling sums and consequent extremal functions
are given \cite[Theorems 1, 2]{OP2}. Also, truncation error analysis and convergence rate is provided in weak Cram\'er
class random fields \cite[Theorems 3, 4]{OP2}. Here we generalize in some fashion certain our findings to Bernstein function class in the $L^p$-spaces.

The main aim of this paper is to discuss $T_{\mathfrak J}(f; \mathbf x)$ when $f\in L^p$ is coordinatewise exponentially
bounded entire function, that is, belongs to suitable Bernstein functions class, in different situations by obtaining
pointwise upper bounds valid for all $\mathbf x$ belonging to the signal domain \cite{OP2}. We call this kind of upper
bound {\em universal}.

All numerical computations in the paper were done using {\it Mathematica}~5.0\,.

\section{\bf Multidimensional Plancherel--P\'olya inequality}
The following few results will prove useful throughout. The first one is the multidimensional
analogue of the famous {\em Plancherel--P\'olya inequality} extended from the one--dimensional
case published in \cite{PP, PPII}.

Denote $\| \cdot \|_p$ the $L_p$--norm  (while $\| \cdot\|_\infty
\equiv \mathrm{ess}\,\, \sup |\cdot |)$ and let $L^p(\mathbb R)$ be the
class of all complex--valued functions whose restrictions to $\mathbb R$ have finite
$L_p$--norm. Assume $f \in L^r(\mathbb R),\, r>0$ to be of exponential type $\sigma>0$ and let
$\{ t_n\}_{n\in \mathbb Z}$ be a separated real sequence, i.e. such that $\inf_{n\neq m}|t_n-t_m|\ge \delta >0$.
Then we have \cite[Eq.~(76)]{PPII}
   \begin{equation} \label{PP}
      \sum_{n \in \mathbb Z} |f(t_n)|^r \le B \|f\|_r^r \, ,
   \end{equation}
where
   \[ \displaystyle
      B = \frac{8(e^{r\,\delta \sigma /2} -1)}{ r\, \pi \sigma\delta^2 } \, .\]
Formula \eqref{PP} is the celebrated Plancherel--P\'olya inequality. It could be mentioned that Boas
\cite{Boas0} established another estimate in the one--dimensional case under different assumptions on $\mathfrak T$
and recently Lindner published an estimate in the one--dimensional case when $p = 2$, \cite{lin}.

Now, we give the multidimensional analogue of the Plancherel--P\'olya inequality. Here, and in what follows
$B_{\boldsymbol \sigma,d}^r,\, r>0$ denotes the Bernstein class \cite{nik} of $d$--variable entire functions
of exponential type at most $\boldsymbol \sigma = (\sigma_1, \cdots, \sigma_d)$ coordinatewise whose restriction
to $\mathbb R^d$ is in $L^r(\mathbb R^d)$.

We will use the following Nikolski\u{\i}--type inequality of different dimensions. 
\begin{lemma}\label{nik_i}\cite[\S 3.4.2]{nik}
If $1\le r \le \infty$ and $1\le m \le d,$ then for each entire function $g(\mathbf z)\in L^r(\mathbb R^d)$
of exponential type $\boldsymbol \sigma$ the following inequality  holds
   \begin{align*}
      \Bigg(\int_{\mathbb R^m} & |g(z_1,\cdots, z_m, z_{m+1},\cdots,z_d)|^r\, {\rm d}z_1\cdots {\rm d}z_m\Bigg)^{1/r} \\
            &\le 2^{d-m}\Bigg(\prod_{k=m+1}^d \sigma_k\Bigg)^{1/r}\Bigg(\int_{\mathbb R^d}
                 |g(z_1,\cdots,z_d)|^r\, {\rm d}z_1\cdots {\rm d}z_d\Bigg)^{1/r}\,.
   \end{align*}
For fixed $d$ and $m$ and arbitrary $ \boldsymbol \sigma = (\sigma_1, \cdots, \sigma_d)$ this inequality is exact in the sense of order.
\end{lemma}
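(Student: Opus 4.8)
\noindent\emph{Sketch of the intended argument.} The plan is to freeze the last $d-m$ variables and strip them off one at a time, reducing everything to a one--dimensional estimate. The one--dimensional input is the Nikolski\u{\i} inequality $\|h\|_{L^\infty(\mathbb R)}\le C_r\,\sigma^{1/r}\|h\|_{L^r(\mathbb R)}$, with $C_r\le 2$, valid for every entire $h$ of exponential type at most $\sigma$ with $h|_{\mathbb R}\in L^r(\mathbb R)$ and $1\le r<\infty$. This is immediate from the Plancherel--P\'olya inequality \eqref{PP}: given $x_0\in\mathbb R$ with $|h(x_0)|$ arbitrarily close to $\sup_{\mathbb R}|h|$, apply \eqref{PP} to $h$ along the separated sequence $\{x_0+n\delta\}_{n\in\mathbb Z}$ and retain only the term $n=0$, obtaining $|h(x_0)|^r\le B\,\|h\|_r^r$ with $B=8(e^{r\delta\sigma/2}-1)/(r\pi\sigma\delta^2)$; letting $|h(x_0)|\uparrow\sup_{\mathbb R}|h|$ and minimizing $B$ over $\delta>0$ (the minimum occurs at $\delta\asymp 1/(r\sigma)$ and in particular forces $\sup_{\mathbb R}|h|<\infty$) gives $\sup_{\mathbb R}|h|^r\le C_r^r\,\sigma\,\|h\|_r^r$ with $C_r=(c\,r)^{1/r}\le 2$ for an absolute constant $c$. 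The case $r=\infty$ of the lemma is trivial, since then $\|g(\cdot,\mathbf z)\|_{L^\infty(\mathbb R^m)}\le\|g\|_{L^\infty(\mathbb R^d)}$; so assume $1\le r<\infty$ henceforth.

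The next step is an auxiliary several--variable $L^\infty$--$L^r$ estimate: for every entire $G$ on $\mathbb C^\ell$ of exponential type at most $\boldsymbol\tau=(\tau_1,\dots,\tau_\ell)$ with $G|_{\mathbb R^\ell}\in L^r(\mathbb R^\ell)$,
\[
\|G\|_{L^\infty(\mathbb R^\ell)}\le C_r^{\ell}\Bigl(\,\prod_{j=1}^{\ell}\tau_j\Bigr)^{1/r}\|G\|_{L^r(\mathbb R^\ell)}\le 2^{\ell}\Bigl(\,\prod_{j=1}^{\ell}\tau_j\Bigr)^{1/r}\|G\|_{L^r(\mathbb R^\ell)}.
\]
This I would prove by induction on $\ell$, the base case $\ell=1$ being the one--dimensional inequality above. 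For the step, write $\|G\|_{L^\infty(\mathbb R^\ell)}=\sup_{w\in\mathbb R}\|G(\cdot,w)\|_{L^\infty(\mathbb R^{\ell-1})}$; for each fixed real $w$ the inductive hypothesis bounds $\|G(\cdot,w)\|_{L^\infty(\mathbb R^{\ell-1})}$ by $C_r^{\ell-1}\bigl(\prod_{j<\ell}\tau_j\bigr)^{1/r}\|G(\cdot,w)\|_{L^r(\mathbb R^{\ell-1})}$, and $\sup_w\|G(\cdot,w)\|_{L^r(\mathbb R^{\ell-1})}^r$ is then controlled by passing from $\sup_w\int(\cdot)$ to $\int\sup_w(\cdot)$, applying the one--dimensional inequality in the variable $w$ for each fixed $\mathbf u\in\mathbb R^{\ell-1}$, and using Fubini; the two constants multiply to $C_r^{\ell}$ and $\prod_{j\le\ell}\tau_j$. (A short argument using the continuity of $G$ together with the bound $\sup_w\int(\cdot)\le\int\sup_w(\cdot)$ shows along the way that every slice occurring is in $L^r$ for \emph{all}, not merely almost all, values of the frozen variables, so no measure--theoretic gap opens.)

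Granting the auxiliary estimate, the lemma follows in one move. Freeze $\mathbf z=(z_{m+1},\dots,z_d)\in\mathbb R^{d-m}$ arbitrarily; since $g|_{\mathbb R^d}\in L^r$, Fubini shows that for almost every $\mathbf t=(t_1,\dots,t_m)\in\mathbb R^m$ the slice $G_{\mathbf t}(\mathbf w):=g(\mathbf t,\mathbf w)$ lies in $L^r(\mathbb R^{d-m})$, and it is entire of exponential type at most $(\sigma_{m+1},\dots,\sigma_d)$, so the auxiliary estimate applied to $G_{\mathbf t}$ gives
\[
|g(\mathbf t,\mathbf z)|^r\le\|G_{\mathbf t}\|_{L^\infty(\mathbb R^{d-m})}^r\le 2^{\,r(d-m)}\Bigl(\,\prod_{k=m+1}^{d}\sigma_k\Bigr)\int_{\mathbb R^{d-m}}|g(\mathbf t,\mathbf s)|^r\,{\rm d}\mathbf s\,.
\]
Integrating in $\mathbf t$ over $\mathbb R^m$ and invoking Fubini once more turns the right--hand side into $2^{\,r(d-m)}\bigl(\prod_{k=m+1}^{d}\sigma_k\bigr)\|g\|_{L^r(\mathbb R^d)}^r$; taking $r$-th roots yields the asserted inequality for every real $\mathbf z$. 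For order--exactness with $d,m$ fixed, test the inequality on $g(\mathbf z)=\prod_{k=1}^{d}\bigl(\sin(\sigma_k z_k/2)/(\sigma_k z_k/2)\bigr)^{2}$, which has exponential type $\boldsymbol\sigma$ coordinatewise and, for $r>1$, lies in $L^r(\mathbb R^d)$; at $\mathbf z=\mathbf 0$ both sides factor into one--variable integrals of the shape $\sigma_k^{-1}\!\int_{\mathbb R}(\sin u/u)^{2r}\,{\rm d}u$, and their ratio is a positive constant depending only on $d-m$ and $r$, so the power of $\boldsymbol\sigma$ cannot be lowered.

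I do not expect a single deep obstacle. The quantitative heart is the one--dimensional Nikolski\u{\i} inequality with the sharp exponent $1/r$ on $\sigma$: \eqref{PP} supplies exactly this after the optimization in $\delta$, and any weaker power of $\sigma$ there would, after $d-m$ iterations, spoil the final dependence on the bandwidths. The one place genuinely requiring care is the bookkeeping in the inductive step of the auxiliary estimate --- keeping the constants strictly multiplicative, i.e.\ preventing the bandwidths of distinct coordinate directions from interacting --- which is precisely what the interchange $\sup_w\int(\cdot)\le\int\sup_w(\cdot)$ secures.
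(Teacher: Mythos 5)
The paper offers no proof of this statement at all: Lemma~\ref{nik_i} is imported verbatim from Nikolski\u{\i}'s book \cite[\S 3.4.2]{nik}, so there is no in-paper argument to compare yours against. On its own merits your proof is correct, and it is essentially the classical argument underlying that reference: the one-dimensional $L^\infty$--$L^r$ Nikolski\u{\i} bound $\|h\|_\infty\le C_r\,\sigma^{1/r}\|h\|_r$, iterated once per frozen coordinate, with Fubini keeping the bandwidth factors multiplicative. The one genuinely nonstandard (and nice, because it keeps everything inside the paper's toolkit) touch is deriving that one-dimensional bound from the Plancherel--P\'olya inequality \eqref{PP} by retaining the single sample $n=0$ along $\{x_0+n\delta\}$ and optimizing over $\delta$: the optimum is at $r\sigma\delta/2\approx1.59$, giving $\sup_{\mathbb R}|h|^r\le c\,r\,\sigma\|h\|_r^r$ with $c\approx0.98$, hence $C_r=(cr)^{1/r}\le e^{c/e}<2$, so your constant is in fact slightly better than the stated $2^{d-m}$. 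The measure-theoretic points are handled correctly: bad slices occur only on null sets where the bound is trivial, and the interchange $\sup_w\int\le\int\sup_w$ legitimately upgrades ``almost every slice in $L^r$'' to ``every slice in $L^r$'' before the inductive hypothesis is invoked, so the induction is not circular. Two trivial remarks: your order-exactness example $\prod_k\bigl(\sin(\sigma_kz_k/2)/(\sigma_kz_k/2)\bigr)^2$ already lies in $L^r(\mathbb R^d)$ for all $r\ge1$ (indeed $r>1/2$), so the caveat ``$r>1$'' is unnecessary and $r=1$ is covered by the same computation; and the case $r=\infty$ is indeed trivial, as you note.
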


\begin{theorem}\label{th1} Let $f \in B_{\boldsymbol \sigma,d}^r,\, r\ge 1$ and let $\mathfrak T = \{ t_{\mathbf n}
=(t_{n_1},...t_{n_d})\}_{\mathbf n \in \mathbb Z^d},$ be real  separated sequence, that is
   \[ \inf_{n_\ell \neq m_\ell}|t_{n_\ell}-t_{m_\ell}|\ge \delta_\ell>0 \qquad \big(\ell = 1, \cdots,d\big).\]
Then
   \begin{equation} \label{PP-d}
      \sum_{\mathbf n \in \mathbb Z^d} |f(t_{\mathbf n})|^r \le \mathfrak B_{d,r} \|f\|_r^r \, ,
   \end{equation}
where
   \begin{equation}  \label{PP-dconst} \displaystyle
      \mathfrak B_{d,r} = \Big( \frac{8}{r \pi}\Big)^d \prod_{\ell=1}^d \frac{e^{r\,\delta_\ell \sigma_\ell /2} -1}
                    {\sigma_\ell \delta_\ell^2} \, .
   \end{equation}
\end{theorem}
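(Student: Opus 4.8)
The plan is to prove \eqref{PP-d} by induction on the dimension $d$, peeling off one coordinate at each step and applying the one--dimensional Plancherel--P\'olya inequality \eqref{PP} in that coordinate. Put $B_\ell := \dfrac{8\,(e^{r\delta_\ell\sigma_\ell/2}-1)}{r\pi\sigma_\ell\delta_\ell^2}$, so that $\mathfrak B_{d,r}=\prod_{\ell=1}^d B_\ell$ by \eqref{PP-dconst}; the case $d=1$ is precisely \eqref{PP} with $\sigma=\sigma_1,\ \delta=\delta_1$. Observe also that $t\mapsto(e^{rt\delta_\ell/2}-1)/t$ is increasing, so applying \eqref{PP} to a function of exponential type at most $\sigma_\ell$ still produces the constant bound $B_\ell$.

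For the step $d\to d+1$, let $f\in B_{(\sigma_1,\dots,\sigma_{d+1}),\,d+1}^r$. For arbitrary fixed real $x_2,\dots,x_{d+1}$ the section $z_1\mapsto f(z_1,x_2,\dots,x_{d+1})$ is entire of exponential type at most $\sigma_1$; applying \eqref{PP} to it — and reading the right--hand side as $+\infty$ whenever this section fails to lie in $L^r(\mathbb R)$, in which case the estimate is vacuously true — gives
\[
   \sum_{n_1\in\mathbb Z}|f(t_{n_1},x_2,\dots,x_{d+1})|^r
      \ \le\ B_1\int_{\mathbb R}|f(z_1,x_2,\dots,x_{d+1})|^r\,{\rm d}z_1
\]
for \emph{every} real $(x_2,\dots,x_{d+1})$. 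I would then set $x_j=t_{n_j}$ for $j=2,\dots,d+1$, sum over $(n_2,\dots,n_{d+1})\in\mathbb Z^{d}$, and interchange the nonnegative sum with the integral by Tonelli's theorem:
\[
   \sum_{\mathbf n\in\mathbb Z^{d+1}}|f(t_{\mathbf n})|^r
      \ \le\ B_1\int_{\mathbb R}\Bigg(\,\sum_{(n_2,\dots,n_{d+1})\in\mathbb Z^{d}}
          |f(z_1,t_{n_2},\dots,t_{n_{d+1}})|^r\Bigg)\,{\rm d}z_1 .
\]

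For almost every real $z_1$ the section $(z_2,\dots,z_{d+1})\mapsto f(z_1,z_2,\dots,z_{d+1})$ lies in $B_{(\sigma_2,\dots,\sigma_{d+1}),\,d}^r$: freezing one variable at a real value does not destroy the coordinatewise exponential type, and the $L^r(\mathbb R^d)$ membership of the section for a.e.\ $z_1$ follows from Fubini's theorem applied to $f\in L^r(\mathbb R^{d+1})$. Hence the induction hypothesis bounds the inner sum, for a.e.\ $z_1$, by $\big(\prod_{\ell=2}^{d+1}B_\ell\big)\int_{\mathbb R^d}|f(z_1,z_2,\dots,z_{d+1})|^r\,{\rm d}z_2\cdots{\rm d}z_{d+1}$. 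Substituting this and collapsing the iterated integral to $\|f\|_r^r$ by Fubini once more yields $\sum_{\mathbf n}|f(t_{\mathbf n})|^r\le\big(\prod_{\ell=1}^{d+1}B_\ell\big)\|f\|_r^r=\mathfrak B_{d+1,r}\|f\|_r^r$, which closes the induction.

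I do not anticipate a genuine obstacle here; the only care needed is the measure--theoretic bookkeeping — legitimately applying \eqref{PP} on the measure--zero lattice slices (handled by the $+\infty$ convention above), justifying the Tonelli interchange, and using the induction hypothesis only almost everywhere in $z_1$. The Nikolski\u{\i}--type Lemma \ref{nik_i} (valid for $r\ge1$) is available as an alternative device for controlling partial $L^r$--norms of $f$ directly, although it is not strictly required for the argument sketched above.
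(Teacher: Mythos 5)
Your proof is correct and follows essentially the same route as the paper: the paper also peels off one coordinate at a time, applies the one--dimensional inequality \eqref{PP} in that coordinate, and interchanges the sum and the integral by Fubini (presenting the case $d=2$ and noting the general case is identical). The only cosmetic difference is that the paper invokes Lemma~\ref{nik_i} to place the one--variable sections in $B_{\sigma_\ell,1}^r$, whereas you handle slice integrability with the a.e.\ Fubini argument and the $+\infty$ convention.
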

\begin{proof} Take $d=2$; the proof will be identical in the case $d>2$. By assumption
$f \in B_{\boldsymbol \sigma,d}^r$ and bearing in mind Lemma~\ref{nik_i} which holds for $r\ge 1$, we conclude that
$f(\cdot,x_2)\in B_{\sigma_1,1}^r$and $f(x_1, \cdot)\in B_{\sigma_2,1}^r$. Therefore, we can apply \eqref{PP} coordinatewise
to $\sum_{\mathbf n \in \mathbb Z^d}|f(t_{\mathbf n})|^r$. Because $\{t_{n_\ell}\}_{n_\ell\in \mathbb Z}$ are separated
with $\delta_\ell,\, \ell=1,2$ we deduce
   \begin{align}
      \sum_{\mathbf n \in \mathbb Z^2} |f(t_{\mathbf n})|^r &=
              \sum_{n_1 \in \mathbb Z} \Bigg( \sum_{n_2 \in \mathbb Z}| f(t_{n_1},t_{n_2})|^r \Bigg) \nonumber\\
         &\le \frac{8(e^{r\,\delta_2 \sigma_2 /2} -1)}{ r\, \pi\sigma_2 \delta_2^2 }
              \sum_{n_1 \in \mathbb Z} \int_{\mathbb R}|f(t_{n_1},x_2)|^r\, {\rm d}x_2\, \nonumber \\  \label{delta2}
         &= \frac{8(e^{r\,\delta_2 \sigma_2 /2} -1)}{ r\, \pi\sigma_2 \delta_2^2 }
              \int_{\mathbb R} \Bigg( \sum_{n_1 \in \mathbb Z} |f(t_{n_1},x_2)|^r\Bigg) {\rm d}x_2 \, .
   \end{align}
The second subsequent application of (\ref{PP}) to (\ref{delta2}) yields
   \begin{equation} \label{delta1}
      \sum_{\mathbf n \in \mathbb Z^2} |f(t_{\mathbf n})|^r \le
            \frac{64(e^{r\,\delta_1 \sigma_1 /2} -1)(e^{r\,\delta_2 \sigma_2 /2} -1)}
                 {\sigma_1\sigma_2\, r^2\, \pi^2 (\delta_1\delta_2)^2 }
                 \int_{\mathbb R^2} |f(x_1,x_2)|^r {\rm d}x_1 {\rm d}x_2 = \mathfrak B_{2,r} \|f\|_r^r\, ,
\end{equation}
where in (\ref{delta1}) Fubini's theorem is used. So, the assertion  of the theorem is proved.
\end{proof}
\begin{remark} {\it In {\rm \cite[\S 46, p.148]{PPII}} the authors assume the condition
   \begin{equation} \label{Delta}
      \sum_{\ell =1}^d |t_{\mathbf n}^{(\ell)}-t_{\mathbf m}^{(\ell)}|^2 \ge \Delta^2
   \end{equation}
on $\mathfrak T$ and $p>0$, so that the multidimensional variant of \eqref{PP} holds. It is clear that our
separability assumption upon $\mathfrak T$ is stronger then \eqref{Delta}, where
   \[ \Delta^2 = \sum_{\ell=1}^d\delta_\ell^2 \]
could be taken. Our approach results in the explicit Plancherel--P\'olya--type constant \eqref{PP-dconst} such that plays
crucial role in establishing {\em explicit} (not necessarily {\em exact}) truncation error upper bounds in the multidimensional
Whittaker--Kotel'nikov--Shannon (WKS) sampling theorem, being a part of numerical routines in computation. }
\end{remark}

Multidimensional Plancherel--P\'olya inequalities can be found e.g. in Triebel's book \cite{T}; also,
during last years several additional very far going generalizations of the multidimensional
Plancherel--P\'olya inequality were obtained, among others in \cite{Han}, \cite{Pes}.  Unfortunately,
no explicit estimates of the Plancherel--P\'olya constant have appeared neither in these articles, nor in articles referenced therein
including the `ancestor' article \cite{PPII}.

\section{\bf On extrema of  $\sum |{\rm sinc}(x-n)|^p$ and their properties}
Another frequently needed mathematical tool concerns upper bound estimates for $\sum |{\rm sinc}(x-n)|^p,\ p>1$.
Here ${\rm sinc}(x) := \frac{\sin (\pi x)}{\pi x}$ for $x \neq 0$ and ${\rm sinc}(0) := 1$.
\begin{theorem}\label{the0} There holds
   \[ \sum_{n\in\mathbb Z}|{\rm sinc}(x-n)|^p\le \mathfrak C_p \qquad \big(x\in \mathbb R\big)\]
where
   \[ \mathfrak C_p:= \begin{cases} \displaystyle
                                1+\left(\frac{2}{\pi}\right)^p\frac{p}{p-1} & \ 1<p<2 \\
                                1 & \ p\ge 2
                      \end{cases}\quad . \]
\end{theorem}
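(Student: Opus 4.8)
The plan is to separate the two regimes $p\ge 2$ and $1<p<2$, since they need genuinely different arguments. The case $p\ge 2$ is soft. First I would record the classical identity $\sum_{n\in\mathbb Z}{\rm sinc}^2(x-n)=1$ for every $x\in\mathbb R$ — this is the exact $p=2$ statement, a partition of unity — which one gets either from the partial-fraction expansion $\sum_{n\in\mathbb Z}(x-n)^{-2}=\pi^2/\sin^2(\pi x)$ combined with ${\rm sinc}(x-n)=(-1)^n\sin(\pi x)/(\pi(x-n))$, or from Parseval applied to the orthonormal basis $\{{\rm sinc}(\cdot-n)\}$ of $B_{\pi,1}^2$ (evaluated on ${\rm sinc}(\cdot-x)$). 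Since $|{\rm sinc}(t)|\le 1$ for all $t$ (because $|\sin(\pi t)|\le\min(1,\pi|t|)$), for $p\ge 2$ we have $|{\rm sinc}(x-n)|^p\le{\rm sinc}^2(x-n)$ termwise, and summing gives $\sum_{n}|{\rm sinc}(x-n)|^p\le 1=\mathfrak C_p$.

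For $1<p<2$ the point is a careful bookkeeping of the tail. The function $S(x):=\sum_{n\in\mathbb Z}|{\rm sinc}(x-n)|^p$ is even and $1$-periodic (relabel $n$), hence symmetric about $x=\tfrac12$, so it suffices to bound $S(x)$ for $x\in[0,\tfrac12]$, where the nearest node is $n=0$. The $n=0$ term contributes $|{\rm sinc}(x)|^p=\big(\sin(\pi x)/(\pi x)\big)^p\le 1$ (as $\sin(\pi x)\le\pi x$), and this will account for the summand $1$ in $\mathfrak C_p$. For the remaining terms I would use $|{\rm sinc}(x-n)|=|\sin(\pi x)|/(\pi|x-n|)$ together with the elementary observation that, listing the numbers $\{|x-n|:n\ne 0\}$ in increasing order $y_1\le y_2\le\cdots$ (they interlace as $1-x,\,1+x,\,2-x,\,2+x,\dots$), one has $y_j\ge j/2$ for every $j\ge 1$. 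Hence
\[
\sum_{n\ne 0}\frac{1}{|x-n|^p}=\sum_{j\ge 1}\frac{1}{y_j^p}\le\sum_{j\ge 1}\frac{1}{(j/2)^p}=2^p\,\zeta(p),
\]
and therefore, using $|\sin(\pi x)|\le 1$,
\[
\sum_{n\ne 0}|{\rm sinc}(x-n)|^p=\frac{|\sin(\pi x)|^p}{\pi^p}\sum_{n\ne 0}\frac{1}{|x-n|^p}\le\Big(\frac{2}{\pi}\Big)^p\zeta(p).
\]

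It then remains to bound $\zeta(p)$ by an integral comparison: since $t\mapsto t^{-p}$ is decreasing, $n^{-p}\le\int_{n-1}^n t^{-p}\,dt$ for $n\ge 2$, so $\zeta(p)=1+\sum_{n\ge 2}n^{-p}\le 1+\int_1^\infty t^{-p}\,dt=1+\frac{1}{p-1}=\frac{p}{p-1}$. Adding the $n=0$ contribution gives $S(x)\le 1+(2/\pi)^p\,p/(p-1)=\mathfrak C_p$ for all $x\in[0,\tfrac12]$, and periodicity/symmetry extends this to all $x\in\mathbb R$.

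The step I expect to be the real obstacle is getting the constant sharp enough in the range $1<p<2$. The crude bound $|x-n|\ge|n|/2$ valid for every $n\ne 0$ loses a factor $2$ and yields $(2/\pi)^p\cdot 2\zeta(p)$, which is already too large near $p=2$; what saves it is the interlacing estimate $y_j\ge j/2$, i.e.\ the fact that among the nodes $n\ne 0$ at most one lies within distance $1$ of $x$, at most two within distance $3/2$, and so on. This, combined with the tight comparison $\zeta(p)\le p/(p-1)$, is exactly what closes the gap. I would also double-check the endpoint behaviour at $x=0$ and $x=\tfrac12$ and verify the reflection invariance $S(1-x)=S(x)$, so that the reduction to $[0,\tfrac12]$ is justified.
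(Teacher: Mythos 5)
Your proof is correct, and it is genuinely more self-contained than what the paper does: the paper gives no argument at all for this theorem, dismissing it as a ``straightforward consequence'' of two cited lemmas (Higgins, Lemma~11.2, and Li, Lemma~2.4). Your two-case treatment essentially reconstructs the content of those citations. For $p\ge 2$ you use the partition-of-unity identity $\sum_{n}\mathrm{sinc}^2(x-n)=1$ (via the partial-fraction expansion of $\pi^2/\sin^2(\pi x)$; the integer points are trivial) together with $|\mathrm{sinc}|\le 1$, which gives the bound $1$ exactly. For $1<p<2$ your reduction to $x\in[0,\tfrac12]$ by $1$-periodicity and evenness is legitimate, the interlacing bound $y_j\ge j/2$ for the ordered distances $1-x\le 1+x\le 2-x\le\cdots$ is correct on that interval, and the chain
\[
\sum_{n\ne 0}|\mathrm{sinc}(x-n)|^p\le\Big(\tfrac{2}{\pi}\Big)^p\zeta(p)\le\Big(\tfrac{2}{\pi}\Big)^p\frac{p}{p-1}
\]
recovers precisely the constant $\mathfrak C_p$ of the statement; your observation that the cruder estimate $|x-n|\ge|n|/2$ would not reach this constant is accurate and is exactly why the interlacing step matters. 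What your route buys is a proof readable without consulting Higgins or Li, at the cost of redoing arguments that are standard in those sources; what the paper's route buys is brevity. No gaps.
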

The statement is the straightforward consequence of lemmata \cite[Lemma 11.2]{hig} and \cite[Lemma 2.4]{li}, so we omit the proof.

Let $\mathfrak J=\mathfrak J_x := \{n:\, |x - n|\le N\},$ $ x\in \mathbb R,$ $N\in\mathbb N$; accordingly let us
define
   \[ \mathfrak h_{p,N}(x) := \sum_{n \not\in \mathfrak J_x}|{\rm sinc}(x-n)|^p \, . \]
It was shown \cite{OP1} that for arbitrary $N\in \mathbb N$ the function $\max \mathfrak h_{2,N}(x) =
\mathfrak h_{2,N}(1/2)$. We begin by showing that this is not true for $p$ and $N$ in general
-- compare the Figures 1,2 below.
\begin{center}
\begin{minipage}{6cm}
{\psfig{figure=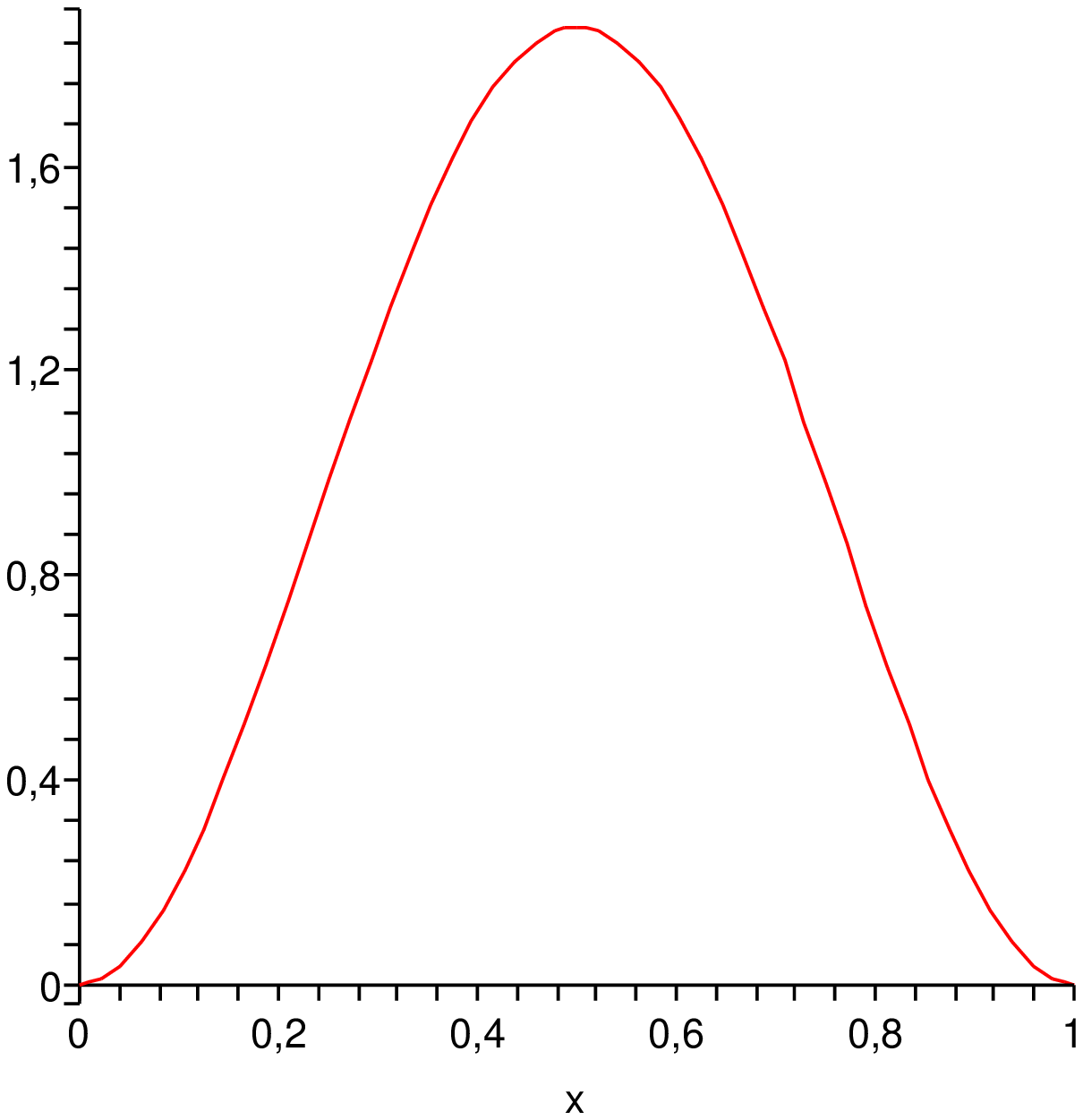,width=6.4cm}} \hspace{0.5cm}{\quad \qquad Figure 1.\ \  $\mathfrak h_{2,2}(x)$}
\end{minipage}
\begin{minipage}{6cm}
{\psfig{figure=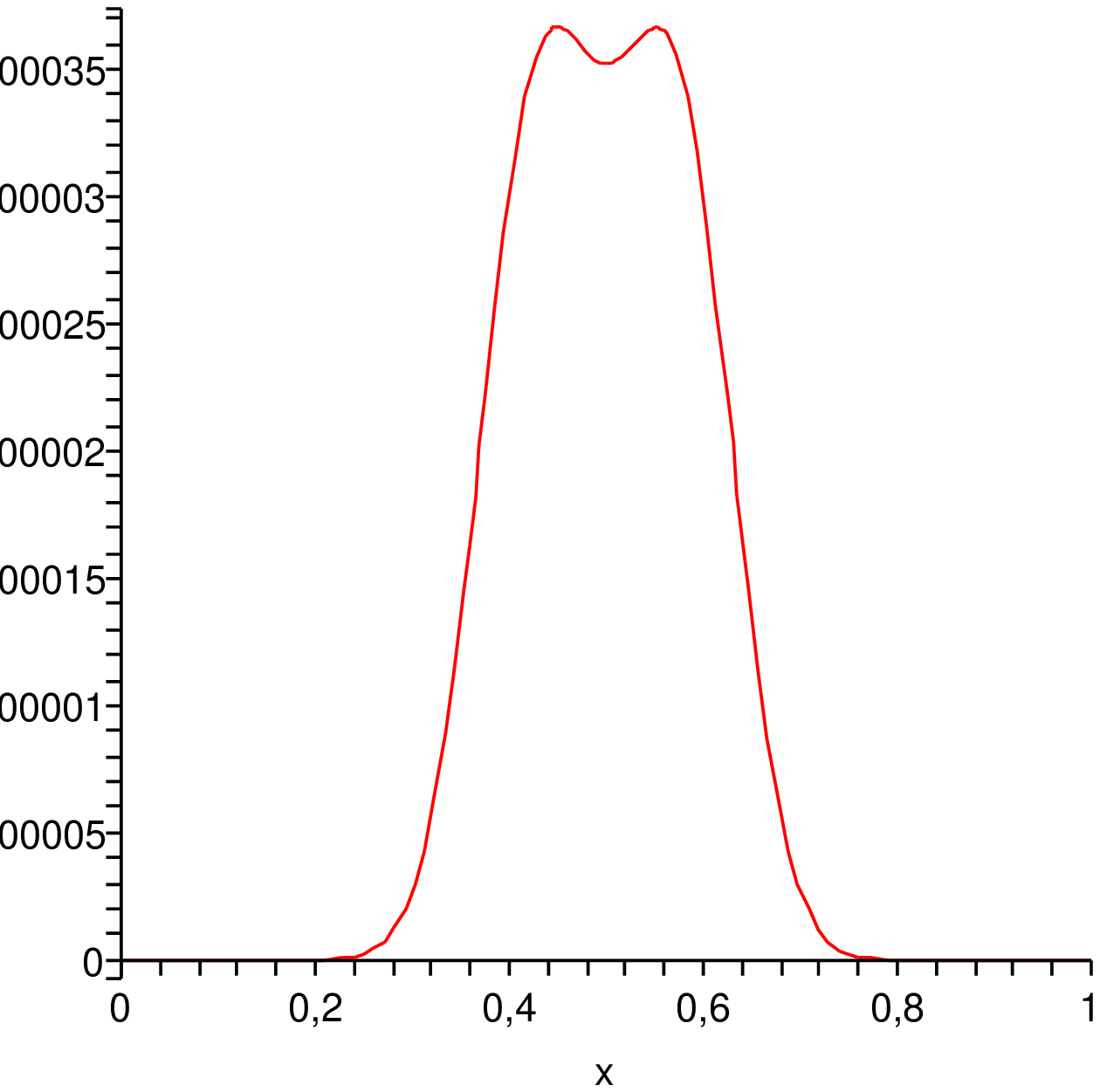,width=6.5cm}} \hspace{0.5cm}{\qquad \quad Figure 2.\ \ $\mathfrak h_{27,2}(x)$}
\end{minipage}
\end{center}
Let us note that $\mathfrak h_{p,N}(x)$ is 1--periodic; furthermore it is symmetric with respect to
$x=1/2$ and $\mathfrak h_{p,N}(1):= 0$. Therefore, it is enough to study this sum only on the interval $[1/2,1)$.
However, equivalently
   \begin{equation} \label{sum}
      \mathfrak h_{p,N}(x) =  \sin^p(\pi x)\sum_{k=N+1}^\infty\Bigg(\frac{1}{(k-x)^p}+\frac{1}{(k+x-1)^p} \Bigg)\, .
   \end{equation}
One introduces the {\em incomplete Lambda function} $\lambda(s;a)$ defined by the series
   \[ \lambda(s;a)= \sum_{n=1}^\infty \frac1{(2(n+a)-1)^s} \qquad \big( s>1,\ a \ge 0\big)\, .\]
\begin{theorem}\label{the2} For each $N$ there exists $p_*$ such that $\big(1/2, \mathfrak h_{p,N}(1/2)\big)$
is a local minimum of $\mathfrak h_{p,N}(x)$ if $p>p_*$, and it is a local maximum for $p<p_*.$
The value $p_*$ can be found as a solution of the equation
   \begin{equation} \label{dzeta}
      4(p_*+1)\lambda(p_*+2;N)-\pi^2\lambda(p_*;N)=0\,.
   \end{equation}
The lower and upper bounds for $p_*$ are:
   \begin{align} \label{p+}
      p_* &\ge \frac18 \Big(\pi^2(2N+1)^2-12+\sqrt{\left(\pi^2(2N+1)^2-12\right)^2-128}\,\,\Big)\,\\ \label{p-}
      p_* &\le \frac14 \sqrt{72\pi^2(N+1)^2(2\pi^2(N+1)^2-2-\pi^2) + 4 + 36\pi^2+9\pi^4}\nonumber \\
          & \qquad + 3\pi^2(N+1)^2 -\frac34\big(2+\pi^2\big)\,.
   \end{align}
\end{theorem}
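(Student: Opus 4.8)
The plan is to exploit the symmetry of $\mathfrak h_{p,N}$ about $x=1/2$: in \eqref{sum} the substitution $x\mapsto 1-x$ interchanges the two summands, so $\mathfrak h_{p,N}(1-x)=\mathfrak h_{p,N}(x)$, whence $x=1/2$ is automatically a critical point whose type is settled by the second-derivative test at $1/2$. Write $\mathfrak h_{p,N}=u\cdot g$ with $u(x)=\sin^p(\pi x)$ and $g(x)=\sum_{k=N+1}^{\infty}\bigl((k-x)^{-p}+(k+x-1)^{-p}\bigr)$. On a neighbourhood of $1/2$ the series for $g,g',g''$ converge uniformly (Weierstrass $M$-test against a constant times $\sum_k k^{-p}$, using $p>1$), so $\mathfrak h_{p,N}$ is $C^2$ there and may be differentiated termwise.

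Next I would evaluate at $x=1/2$. By symmetry --- equivalently, because the summands in $g'$ cancel in pairs once $k-x=k+x-1$ --- we get $u'(1/2)=g'(1/2)=0$, while $u(1/2)=1$, $u''(1/2)=-p\pi^2$, and, using $(k-\tfrac12)^{-s}=2^{s}(2k-1)^{-s}$ together with the definition of $\lambda$,
\[
 g(1/2)=2^{\,p+1}\lambda(p;N),\qquad g''(1/2)=p(p+1)\,2^{\,p+3}\lambda(p+2;N).
\]
Since the cross term $2u'g'$ vanishes at $1/2$, it follows that $\mathfrak h_{p,N}'(1/2)=0$ and that $\mathfrak h_{p,N}''(1/2)=u''(1/2)\,g(1/2)+u(1/2)\,g''(1/2)$, i.e.
\[
 \mathfrak h_{p,N}''(1/2)=p\,2^{\,p+1}\bigl(4(p+1)\lambda(p+2;N)-\pi^2\lambda(p;N)\bigr).
\]
Put $F_N(p):=4(p+1)\lambda(p+2;N)-\pi^2\lambda(p;N)$. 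The factor $p\,2^{p+1}$ is positive for $p>1$, so $\bigl(1/2,\mathfrak h_{p,N}(1/2)\bigr)$ is a strict local minimum when $F_N(p)>0$, a strict local maximum when $F_N(p)<0$, and the borderline case $F_N(p)=0$ is exactly \eqref{dzeta}.

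What remains is to show that $F_N$ changes sign precisely once on $(1,\infty)$, from $-$ to $+$; its unique zero is then the desired $p_*$. Equivalently, $R_N(p):=4(p+1)\lambda(p+2;N)\big/\bigl(\pi^2\lambda(p;N)\bigr)$ must increase strictly from $R_N(1^+)=0$ --- true because $\lambda(p;N)\to\infty$ as $p\to1^+$ while the numerator stays bounded --- to $R_N(\infty)=+\infty$ --- true because $\lambda(s;N)=(2N+1)^{-s}\bigl(1+o(1)\bigr)$ as $s\to\infty$, so $R_N(p)\sim 4(p+1)\big/\bigl(\pi^2(2N+1)^2\bigr)$. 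The only non-routine point is the monotonicity: $p+1$ is increasing, and $\lambda(p+2;N)/\lambda(p;N)=\bigl(\sum_{k>N}c_k^{\,p+2}\bigr)\big/\bigl(\sum_{k>N}c_k^{\,p}\bigr)$, with $c_k:=(2k-1)^{-1}$, is strictly increasing, since differentiating in $p$ and symmetrizing the resulting double sum shows its derivative is a positive multiple of
\[
 \sum_{j,k>N}c_j^{\,p}c_k^{\,p}\,(\ln c_j-\ln c_k)\,\bigl(c_j^{\,2}-c_k^{\,2}\bigr),
\]
which is strictly positive because $t\mapsto\ln t$ and $t\mapsto t^2$ are both increasing on $(0,\infty)$ and the $c_k$ are distinct. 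A product of positive increasing functions being increasing, $R_N$ is strictly increasing, which gives the unique $p_*$ with the stated local-extremum dichotomy and with \eqref{dzeta} satisfied there.

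For the bounds I would rewrite \eqref{dzeta} as $p_*+1=\tfrac{\pi^2}{4}\,\lambda(p_*;N)\big/\lambda(p_*+2;N)$ and estimate this ratio two-sidedly. The easy direction is term-by-term: $(2k-1)^{-p}\ge(2N+1)^{2}(2k-1)^{-p-2}$ gives $\lambda(p;N)/\lambda(p+2;N)\ge(2N+1)^2$, hence already a lower bound for $p_*$ of order $\tfrac{\pi^2}{4}(2N+1)^2$. Sharper two-sided control of $\lambda(p_*;N)/\lambda(p_*+2;N)$ --- obtained by keeping the leading term $(2N+1)^{-s}$ of $\lambda(s;N)$ and comparing the remaining tails with the integrals $\int(2t-1)^{-s}\,{\rm d}t$ --- expresses the ratio in terms of $p_*$ itself; clearing the factor $1/(p_*-1)$ that the integral tails introduce turns the relation $p_*+1=\tfrac{\pi^2}{4}\,\lambda(p_*;N)\big/\lambda(p_*+2;N)$ into the quadratic inequalities whose relevant roots are the right-hand sides of \eqref{p+} and \eqref{p-}. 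The main obstacle is the monotonicity step of the previous paragraph; once that is in hand the sign analysis is immediate, and the remaining work for \eqref{p+}--\eqref{p-} is a lengthy but routine estimation.
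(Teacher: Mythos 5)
Your treatment of the core dichotomy is correct and takes a genuinely different route from the paper. The paper arrives at the same value $\mathfrak h_{p,N}''(1/2)=2^{p+1}p\big(4(p+1)\lambda(p+2;N)-\pi^2\lambda(p;N)\big)$ (via termwise differentiation and L'H\^ospital rather than your product-rule split $u\cdot g$, whose constants you compute correctly), but it then converts the sign question into an integral through the Mathieu-type identity \eqref{MathieuX} and gets existence of $p_*$ by continuity and uniqueness by a once-sign-changing-integrand argument. You instead write the second derivative as a positive factor times $R_N(p)-1$ with $R_N(p)=4(p+1)\lambda(p+2;N)/\big(\pi^2\lambda(p;N)\big)$ and prove $R_N$ increases strictly from $0$ (as $p\to1^+$) to $\infty$, the key point being the Chebyshev-type symmetrization showing that $\lambda(p+2;N)/\lambda(p;N)$ is strictly increasing in $p$; that computation is right, and it delivers \eqref{dzeta} together with the local-max/local-min dichotomy more directly and arguably more cleanly than the paper's integral argument. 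What the paper's route buys instead is that the same integral representation is then reused almost verbatim to manufacture the explicit bounds \eqref{p+} and \eqref{p-}.

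The genuine gap is in the bounds. For \eqref{p+} you are essentially done but omit one step: the term-by-term inequality gives $p_*+1=\tfrac{\pi^2}{4}\lambda(p_*;N)/\lambda(p_*+2;N)\ge\tfrac{\pi^2}{4}(2N+1)^2$, and you still need to check that $\tfrac{\pi^2(2N+1)^2}{4}-1$ dominates the right-hand side of \eqref{p+}; it does, since with $A=\pi^2(2N+1)^2$ one has $(A+4)^2-\big((A-12)^2-128\big)=32A>0$, so your (sharper) bound implies the stated one. For \eqref{p-}, however, the argument is only asserted, and the assertion is inaccurate: keeping the leading term of $\lambda$ and bounding the tail by $\int_{N+1}^\infty(2t-1)^{-s}\,{\rm d}t$ yields something like $p_*^2-1\le\tfrac{\pi^2(2N+1)^2}{4}(p_*-1)+\tfrac{\pi^2(2N+1)^3}{8}$, whose relevant root is of order $\pi^2(2N+1)^2/4$ and is \emph{not} the right-hand side of \eqref{p-} (which is of order $6\pi^2(N+1)^2$). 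So your "routine estimation" does not reproduce the stated expression; to finish you must either verify, for every $N$, that your root lies below the expression in \eqref{p-} (plausible but unproved), or derive \eqref{p-} as the paper does, from the comparison of the two integrals in \eqref{AB} leading to the condition $(p+2)(1+p^{-1})>\tfrac{3\pi^2(4(N+1)^2-1)}{2}$ and solving that quadratic in $p$. As written, the upper bound \eqref{p-} is not established.
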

\begin{proof} Let us consider
   \begin{align} \label{der}
      \mathfrak h_{p,N}'(x) &= p\sin^{p-1}(\pi x)(2x-1) \Bigg[\frac{\pi\cos(\pi x)}{2x-1}
                              \sum_{k=N+1}^\infty \Bigg( \frac1{(k-x)^p}+\frac1{(k+x-1)^p}\Bigg)\nonumber \\
                            & \qquad + \frac{\sin(\pi x)}{2x-1}
                              \sum_{k=N+1}^\infty \Bigg(\frac1{(k-x)^{p+1}}-\frac1{(k+x-1)^{p+1}}\Bigg)\Bigg]\, .
   \end{align}
Here, the termwise differentiation of series \eqref{sum} is legitimate because the series of derivatives
$\mathfrak h_{p,N}'(x)$ given by \eqref{der} converges uniformly.

Now, we will study \eqref{der} when $x\downarrow 1/2$. For all $x\in [1/2,1]$ we have that $p \sin^{p-1}(\pi x) (2x-1)\ge 0$.
Letting $x\downarrow 1/2$ the following limits appear in \eqref{der}:
   \begin{align*}
      \frac{\pi\cos(\pi x)}{2x-1} \sum_{k=N+1}^\infty \left(\frac{1}{(k-x)^p}+\frac{1}{(k+x-1)^p}\right)\,\,
                                    &\to \,\,-\sum_{k=N+1}^\infty \frac{\pi^2}{(k-1/2)^p}\,;\\
      \frac{\sin(\pi x)}{2x-1} \sum_{k=N+1}^\infty \left(\frac{1}{(k-x)^{p+1}}-\frac{1}{(k+x-1)^{p+1}}\right)\,\,
                                     &\to \,\, (p+1)\sum_{k=N+1}^\infty \frac{1}{(k-1/2)^{p+2}}\, .
   \end{align*}
Obviously, $\mathfrak h_{p,N}'(1/2)=0$. We are interested in the nature of the extremum at $x=1/2$.
Hence, from \eqref{der}, by L'H\^ospital rule one concludes
   \begin{equation} \label{sigh}
      \mathfrak h_{p,N}''(1/2) = 2\lim_{x\downarrow 1/2}\frac{\mathfrak h_{p,N}'(x)}{2x-1} =
              \sum_{k=N+1}^\infty \frac{2p(p+1)}{(k-1/2)^{p+2}}-\sum_{k=N+1}^\infty \frac{2p\pi^2}{(k-1/2)^p}\, .
   \end{equation}
To investigate \eqref{sigh} we take the following auxiliary result. Namely, let $\kappa \in \mathbb R_+\setminus \{1\}$ and
let $a:\mathbb{R}_+\to \mathbb{R}_+$ be a positive monotone increasing function. Then
   \begin{equation} \label{MathieuX}
      \int_{a(1)}^\infty\frac{[a^{-1}(x)]}{x^\kappa}\,{\rm d}x = \frac1{\kappa-1}\sum_{n=1}^\infty\frac1{a^{\kappa-1}(n)},
   \end{equation}
where $[\alpha]$ denotes the integer part of $\alpha$ and $a^{-1}(x)$ is the inverse function to $a(x)$, \cite[\S 8]{Pog4}. Now,
let us rewrite \eqref{sigh} in the form
   \[\mathfrak h_{p,N}''(1/2) = 2p(p+1)\sum_{n=1}^\infty\frac{1}{(n+N-1/2)^{p+2}}
                                         - 2p\pi^2\sum_{n=1}^\infty\frac{1}{(n+N-1/2)^{p}}\,.\]
By specifying $a(x)\equiv x+N-1/2\,,a^{-1}(x)=x-N+1/2$ we find from \eqref{MathieuX} that
   \begin{equation} \label{R13}
      \mathfrak h_{p,N}''(1/2) = 2p^2 \int_{N+1/2}^{\infty}\frac{[x-N+1/2]}{x^{p+1}}
                                 \Bigg(\frac{(p+2)(1+p^{-1})}{x^2}-\pi^2\Bigg){\rm d}x\,.
   \end{equation}
Hence, if
   \begin{equation} \label{N}
      \frac{(p+2)(1+p^{-1})}{(N+1/2)^2}-\pi^2< 0,
   \end{equation}
the inequality $\mathfrak h_{p,N}''(1/2)<0$ shows that the value $\mathfrak h_{p,N}(1/2)$ is a local maximum
of $\mathfrak h_{p,N}(x)$. The inequality \eqref{N} holds for $p=1$, the expression $(p+2)(1+p^{-1})$
decreases if $p\in [1,\sqrt{2})$ and increases if $p\in (\sqrt{2},\infty)$. Solving \eqref{N} with respect to $p$ we get
   \[p< \frac18 \left(\pi^2(2N+1)^2-12+\sqrt{\left(\pi^2(2N+1)^2-12\right)^2-128}\right)\,.\]
Thus, if $p_*$ exists, the lower bound on $p_*$ follows from the last inequality.

Now, we are looking for values of $p, N$ which ensure a local minimum at $x=1/2$. Considering \eqref{R13} once more,
it is sufficient to find when
   \begin{equation} \label{+}
      \int_{N+1/2}^{\infty}\frac{[x-N+1/2]}{x^{3}}\Bigg(\frac{(p+2)(1+p^{-1})}{x^2}-\pi^2\Bigg)
            \frac{{\rm d}x}{x^{p-2}}>0\,.
   \end{equation}
Fixing $N\in \mathbb N$ we remark that for some $p_0>\sqrt{2}$ there holds
   \begin{equation} \label{p*}
      (p+2)(1+p^{-1})\int_{N+1/2}^{\infty}\frac{[x-N+1/2]}{x^{5}}\,{\rm d}x>
      \pi^2\int_{N+1/2}^{\infty}\frac{[x-N+1/2]}{x^{3}}\,{\rm d}x \, ,
\end{equation}
when $p>p_0$, since both integrals are positive and finite, and $(p+2)(1+p^{-1})$ increases in $p$ when $p>\sqrt{2}$. Hence
   \begin{equation} \label{min}
      \int_{N+1/2}^{\infty}\frac{[x-N+1/2]}{x^{3}}\left(\frac{(p+2)(1+p^{-1})}{x^2}-\pi^2\right){\rm d}x>0 \qquad
      \big(p > p_0\big).
   \end{equation}
Indeed, the integrand in \eqref{min} changes the sign from positive to negative only once on $\mathbb{R}_+$.
Therefore, multiplying the integrand by the decreasing function $x^{2-p}$ does not change the sign of the
integral \eqref{min} for $p>\max(p_0,2)$. Thus \eqref{+} definitely holds for all $p\ge 2$ which satisfy \eqref{p*}.

Now, we have
   \begin{equation} \label{AB}
      \frac{ \int_{N+1/2}^{\infty} \frac{[x-N+1/2]}{x^{3}}\,{\rm d}x}{ \int_{N+1/2}^{\infty}\frac{[x-N+1/2]}{x^{5}}\,{\rm d}x} <
      \frac{ \int_{N+1/2}^{\infty}\frac{x-N+1/2}{x^{3}}\,{\rm d}x}{ \int_{N+1/2}^{\infty}\frac{x-N-1/2}{x^{5}}\,{\rm d}x}
          = \frac{3}{2}(4(N+1)^2-1) =: A(N)\,
   \end{equation}
and it follows easily that \eqref{+} holds for any $p$ satisfying the inequality
   \begin{equation} \label{pN}
      B(p):= (p+2)(1+p^{-1})>\frac{3\pi^2(4(N+1)^2-1)}{2}\, .
   \end{equation}
Indeed, bearing in mind the equivalent form of \eqref{AB}, that is
   \[ \int_{N+1/2}^{\infty}\frac{[x-N+1/2]}{x^{5}}\, {\rm d}x > \frac1{A(N)}\int_{N+1/2}^{\infty}\frac{[x-N+1/2]}{x^{3}}\, {\rm d}x\, ,\]
and re--writing \eqref{min} in the form
   \begin{align*}
      B(p)\,\int_{N+1/2}^{\infty}&\frac{[x-N+1/2]}{x^{5}}\, {\rm d}x - \pi^2\, \int_{N+1/2}^{\infty}\frac{[x-N+1/2]}{x^{3}}\,{\rm d}x \\
            & \qquad  > \Bigg( \frac{B(p)}{A(N)}- \pi^2\Bigg) \int_{N+1/2}^{\infty}\frac{[x-N+1/2]}{x^{3}}\, {\rm d}x >0\,,
   \end{align*}
we deduce the condition \eqref{pN}. (Moreover, $p>2$ follows from this inequality). Thus, solving the inequality
\eqref{pN} with respect to $p$ we conclude that, since $\mathfrak h_{p,N}''(1/2)>0$ for
   \begin{align*}
      p &\ge \frac14 \,\sqrt{72\pi^2(N+1)^2(2\pi^2(N+1)^2-2-\pi^2)+4+36\pi^2+9\pi^4} \\
        &\qquad \qquad + 3\pi^2\,\Big((N+1)^2 - \frac14\Big) - \frac32 \, ,
   \end{align*}
the function $\mathfrak h_{p,N}(x)$ possesses a local minimum at $x=1/2.$ So, if $p_*$ exists, the upper
bound on $p_*$ clearly follows from the last inequality.

Now, we prove the existence of $p_*.$  From the previous considerations it follows that for some values of
$p$ there is a local maximum in $x=1/2,$ for others there is a local minimum. Therefore, due
to the continuity at $p$ of the left part of (\ref{+}), we find that there exists at least one $p_*$ for which
   \[ \int_{N+1/2}^{\infty}\frac{[x-N+1/2]}{x^{p_*+1}}
           \Bigg(\frac{(p_*+2)(1+p_*^{-1})}{x^2}-\pi^2\Bigg){\rm d}x = 0. \]
By \eqref{N} we have that $p_*>\sqrt 2$. On the other hand as $(p+2)(1+p^{-1})$ increases, for $p>p_*$ we have
   \[ \int_{N+1/2}^{\infty}\frac{[x-N+1/2]}{x^{p_*+1}}\Bigg(\frac{(p+2)(1+p^{-1})}{x^2}-\pi^2\Bigg){\rm d}x\ge 0\,.\]
As the integrand changes the sign from positive to negative only once on $\mathbb{R}_+$ multiplying
the integrand by the decreasing function $x^{p_*-p}$ we get
   \[ \int_{N+1/2}^{\infty}\frac{[x-N+1/2]}{x^{p+1}}\Bigg(\frac{(p+2)(1+p^{-1})}{x^2}-\pi^2\Bigg){\rm d}x> 0\,.\]
Therefore, if $x=1/2$ is the abscissa of a local minimum for some $p_1$ then for all $p>p_1$ there is a
local minimum at $x=1/2$ as well. This shows the uniqueness of $p_*$ which is a root of \eqref{sigh}.
The final step in the proof is to notice that the
right--hand expression in \eqref{sigh} can be represented as
   \[ \sum_{k=N+1}^\infty \frac{2p(p+1)}{(k-1/2)^{p+2}}-\sum_{k=N+1}^\infty \frac{2p\pi^2}{(k-1/2)^p} =
        2^{p+1}p\big( 4(p+1) \lambda(p+2;N)-\pi^2 \lambda(p;N)\big)\, .\]
This finishes the proof of Theorem 3.
\end{proof}

\begin{corollary}\label{3.1}
   \begin{itemize}
      \item[a)] $\big(1/2, \mathfrak h_{p,N}(1/2)\big)$ is a local maximum by {\rm (\ref{N})} if
         \[ N \ge \frac1\pi \sqrt{(p+2)(1+p^{-1})}-\frac{1}{2}\,.\]
      \item[b)] By the previous estimate $\big(1/2, \mathfrak h_{p,N}(1/2)\big)$ is a local maximum
                for arbitrary $N\in\mathbb N$ if $ \sqrt{(p+2)(1+p^{-1})}\le 3\pi/2,$ i.e.
         \[p\le \frac18 \left(9 \pi^2+\sqrt{16-216\pi^2+81\pi^4}-12\right)\approx 19.1019\, .\]
   \end{itemize}
\end{corollary}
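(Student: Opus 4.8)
The plan is to read both statements off the criterion \eqref{N} and the integral representation \eqref{R13} already established inside the proof of Theorem~\ref{the2}.

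For part a), recall that \eqref{N}, i.e.\ $(p+2)(1+p^{-1})/(N+1/2)^2-\pi^2<0$, is precisely what forces $\mathfrak h_{p,N}''(1/2)<0$ via \eqref{R13}. I would simply solve it for $N$ instead of for $p$: since $(p+2)(1+p^{-1})>0$ and $N+1/2>0$, \eqref{N} is equivalent to $(N+1/2)^2>\pi^{-2}(p+2)(1+p^{-1})$, that is to
\[
   N>\frac1\pi\sqrt{(p+2)(1+p^{-1})}-\frac12 .
\]
To cover the boundary value $N=\pi^{-1}\sqrt{(p+2)(1+p^{-1})}-\tfrac12$, where \eqref{N} degenerates to an equality, I would go back to \eqref{R13}: there $(p+2)(1+p^{-1})/x^2-\pi^2\le 0$ for every $x\ge N+1/2$ and is strictly negative for $x>N+1/2$, while $[x-N+1/2]\ge 1$ throughout; hence the integral in \eqref{R13} is still strictly negative, $\mathfrak h_{p,N}''(1/2)<0$, and $\bigl(1/2,\mathfrak h_{p,N}(1/2)\bigr)$ remains a local maximum. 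This gives a) with the non-strict sign stated there.

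For part b) the key observation is monotonicity in $N$: the bound from a) reads $N+\tfrac12\ge\pi^{-1}\sqrt{(p+2)(1+p^{-1})}$, and for fixed $p$ its left-hand side increases with $N$ while its right-hand side is constant. Hence it is enough that the bound hold for the smallest admissible value $N=1$, i.e.\ $\tfrac32\ge\pi^{-1}\sqrt{(p+2)(1+p^{-1})}$, equivalently $\sqrt{(p+2)(1+p^{-1})}\le 3\pi/2$; then a) applies for every $N\in\mathbb N$. Squaring, writing $(p+2)(1+p^{-1})=(p+1)(p+2)/p$, and clearing the positive denominator $p$, this becomes $(p+1)(p+2)\le\tfrac{9\pi^2}{4}\,p$, that is
\[
   p^2+\Bigl(3-\tfrac{9\pi^2}{4}\Bigr)p+2\le 0 .
\]
Since $3-\tfrac{9\pi^2}{4}<0$ and the product of the roots of this quadratic equals $2>0$, both roots are positive and the quadratic is $\le 0$ exactly between them; its larger root is
\[
   p_+=\frac{\bigl(\tfrac{9\pi^2}{4}-3\bigr)+\sqrt{\bigl(\tfrac{9\pi^2}{4}-3\bigr)^2-8}}{2}
      =\frac18\Bigl(9\pi^2-12+\sqrt{81\pi^4-216\pi^2+16}\,\Bigr)\approx 19.1019 ,
\]
and its smaller root equals $2/p_+\approx 0.105<1$, so in the range $p>1$ relevant here the condition collapses to $p\le p_+$, which is the stated estimate. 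The only delicate point is the boundary case in a) — checking that the non-strict inequality on $N$ still yields a strictly negative value in \eqref{R13} — and it is disposed of as above; everything else is elementary algebra.
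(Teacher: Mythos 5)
Your proof is correct and follows essentially the same route as the paper, which presents the corollary as an immediate consequence of the criterion \eqref{N} (solved for $N$ in part a), then imposed for the worst case $N=1$ in part b)). Your extra care with the boundary case $N=\pi^{-1}\sqrt{(p+2)(1+p^{-1})}-\tfrac12$ via the sign of the integrand in \eqref{R13}, and with discarding the small root of the quadratic since $p>1$, only makes explicit what the paper leaves implicit.
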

\begin{remark} \textit{The case {\rm b)} in {\rm Corollary \ref{3.1}} covers the particular case $p=2$ discussed in \cite{OP1}.}
\end{remark}
We need some bounds for $p_*$ in solving numerically the
transcendental equation (\ref{dzeta}).
\begin{remark} \label{rem1} \textit{For certain $N$ the corresponding values of $p_*$ and their bounds
given in Theorem~{\rm\ref{the2}}, are shown in the table:}
\begin{center}\begin{tabular}{|c|c|c|c|c|}
  \hline
  \rule[-2mm]{0mm}{0.8cm}$N$ & 1 & 2 & 3 & 4 \\
  \hline
  \rule{0mm}{0.6cm}$p_*$ & 21.2069 & 60.685 & 119.903 & 198.859\\
  lower bound \eqref{p+} & 19.1019 & 58.6509 & 117.8857 & 196.8493\\
  \rule[-3mm]{0mm}{0.5cm}upper bound \eqref{p-} & 219.057 & 515.1504 & 929.6755 & 1462.6349\\
  \hline
\end{tabular}\end{center}
\textit{One can see that the lower bounds correspond to exact values of $p_*\,.$ However, the upper bounds are rough.
But they can be sharpened by, for example, following the previous method of calculation. One considers
   \[(p+2)(1+p^{-1})\int_{N+1/2}^{\infty}\frac{[x-N+1/2]}{x^{4+\varepsilon}}\, {\rm d}x>
                   \pi^2\int_{N+1/2}^{\infty}\frac{[x-N+1/2]}{x^{2+\varepsilon}}\, {\rm d}x\]
instead of {\rm \eqref{p*}} where $p-1>\varepsilon>0$, then minimizes the value of the upper bound for $p_*$
with respect to some admissible $\varepsilon.$}
\end{remark}
\begin{remark}\textit{Due to numerical results in {\rm Remark~{\rm\ref{rem1}}} the estimate in {\rm Corollary \ref{3.1} b)} can be
improved to $p\le 21.2069\,.$}
\end{remark}
Now, we are interested in properties of \eqref{sum} at an arbitrary point $x$. Let us denote the $k$--th term of
(\ref{sum}) by $\psi_k(x), k\ge 2$, i.e.
   \[ \mathfrak h_{p,N}(x)= \sin^p(\pi x)\sum_{k=N+1}^\infty \Bigg(\frac1{(k-x)^p}+\frac{1}{(k+x-1)^p}\Bigg)
                            =: \sum_{k=N+1}^\infty \psi_k(x)\,.\]
\begin{lemma}\label{th3} For $x\in(1/2,1)$, from $\psi_k'(x)<0$ it follows that $\psi_{k+1}'(x)<0.$
\end{lemma}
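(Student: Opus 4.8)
The plan is to compute $\psi_k'(x)$ explicitly and show that its sign is governed by a single function of $k$ which is monotone, so that once the derivative turns negative at index $k$ it stays negative for all larger indices. Differentiating
\[
   \psi_k(x) = \sin^p(\pi x)\Bigg(\frac{1}{(k-x)^p}+\frac{1}{(k+x-1)^p}\Bigg)
\]
gives
\[
   \psi_k'(x) = p\sin^{p-1}(\pi x)\Bigg[\pi\cos(\pi x)\Bigg(\frac{1}{(k-x)^p}+\frac{1}{(k+x-1)^p}\Bigg)
            + \sin(\pi x)\Bigg(\frac{1}{(k-x)^{p+1}}-\frac{1}{(k+x-1)^{p+1}}\Bigg)\Bigg].
\]
Since $p\sin^{p-1}(\pi x)>0$ on $(1/2,1)$, the sign of $\psi_k'(x)$ is the sign of the bracket; the first line would I like to write as $\pi\cos(\pi x)\,\sigma_k(x) + \sin(\pi x)\,\tau_k(x)$ with $\sigma_k(x)>0$ and $\tau_k(x)$ of the sign of $(k+x-1)-(k-x)$-type comparison. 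On $(1/2,1)$ we have $\cos(\pi x)<0$ and $\sin(\pi x)>0$, while $k-x<k+x-1$ for $x>1/2$, so $\tau_k(x)>0$: the first term is negative, the second positive, and $\psi_k'(x)<0$ exactly when the negative term dominates.

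Next I would isolate this domination as an inequality between two positive quantities and track its $k$-dependence. Writing the condition $\psi_k'(x)<0$ as
\[
   \pi|\cos(\pi x)|\Bigg(\frac{1}{(k-x)^p}+\frac{1}{(k+x-1)^p}\Bigg)
       > \sin(\pi x)\Bigg(\frac{1}{(k-x)^{p+1}}-\frac{1}{(k+x-1)^{p+1}}\Bigg),
\]
I would divide both sides by the common factor $\frac{1}{(k-x)^p}+\frac{1}{(k+x-1)^p}$ and reduce the statement to
\[
   \pi|\cos(\pi x)| > \sin(\pi x)\cdot R_k(x), \qquad
   R_k(x) := \frac{(k-x)^{-(p+1)}-(k+x-1)^{-(p+1)}}{(k-x)^{-p}+(k+x-1)^{-p}} .
\]
The key step is then to show $R_k(x)$ is decreasing in $k$ for fixed $x\in(1/2,1)$: once that is established, $\psi_k'(x)<0$ forces $\pi|\cos(\pi x)| > \sin(\pi x) R_k(x) \ge \sin(\pi x) R_{k+1}(x)$, hence $\psi_{k+1}'(x)<0$, which is exactly the claim.

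The main obstacle is the monotonicity $R_{k+1}(x)\le R_k(x)$. I expect to handle it by substituting $u = k-x$, $v = k+x-1$ (so $v-u = 2x-1\in(0,1)$ is fixed and both $u,v$ increase by $1$ as $k\mapsto k+1$), writing
\[
   R_k = \frac{u^{-(p+1)}-v^{-(p+1)}}{u^{-p}+v^{-p}} = \frac{v^{p+1}-u^{p+1}}{uv\,(u^p+v^p)}\cdot\frac{1}{\,\cdot\,},
\]
after clearing denominators, and then comparing $R_k$ with $R_{k+1}$ by cross-multiplication — a polynomial-type inequality in $u,v$ with $v-u$ fixed. An alternative, and probably cleaner, route is to write $R_k(x) = g(u,v)$ and check $\partial_t g(u+t, v+t)\le 0$ for $t\ge 0$ by a direct derivative computation, using that $h(w) = w^{-p}$ is convex and decreasing so that divided differences of $h$ and of $w\,h(w)$ behave monotonically. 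Either way the inequality is elementary but calculation-heavy; I would present the substitution and the sign bookkeeping and relegate the algebra to a short display, noting that the strict convexity/monotonicity of the power function is what makes $R_k$ strictly decreasing in $k$.
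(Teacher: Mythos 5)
Your proposal is correct in strategy, and it takes a genuinely different route from the paper's proof. The paper represents $\psi_k'(x)$ as a Laplace--type integral, $\psi_k'(x)=\int_0^\infty g_k(t)\,{\rm d}t$, via $\Gamma(s)\alpha^{-s}=\int_0^\infty e^{-\alpha t}t^{s-1}\,{\rm d}t$, shows that $g_k$ changes sign exactly once (from negative to positive), and then uses $\psi_{k+1}'(x)=\int_0^\infty e^{-t}g_k(t)\,{\rm d}t$ with the decreasing weight $e^{-t}$ to conclude that a negative integral stays negative; the same single-sign-change-plus-monotone-weight device reappears in the proof of Theorem~\ref{the2}. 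You instead observe that on $(1/2,1)$, where $\cos(\pi x)<0<\sin(\pi x)$ and $k-x<k+x-1$, the condition $\psi_k'(x)<0$ is equivalent to $\pi|\cos(\pi x)|>\sin(\pi x)\,R_k(x)$, and you reduce the lemma to the decrease of $R_k(x)$ in $k$ for fixed $x$; with $u=k-x$, $v=k+x-1$ one has exactly $R_k=\frac{v^{p+1}-u^{p+1}}{uv\,(u^p+v^p)}$ (no further factor, as your display suggests). The one step you only sketched, the monotonicity, does close, and more simply than your cross-multiplication/convexity plan: with $a=u+t<b=v+t$ the quotient rule gives for the numerator of $\frac{{\rm d}}{{\rm d}t}R$
\[
-(p+1)\bigl(a^{-(p+2)}-b^{-(p+2)}\bigr)\bigl(a^{-p}+b^{-p}\bigr)+p\bigl(a^{-(2p+2)}-b^{-(2p+2)}\bigr)
=-\bigl(a^{-(2p+2)}-b^{-(2p+2)}\bigr)-(p+1)\,a^{-p}b^{-p}\bigl(a^{-2}-b^{-2}\bigr)<0,
\]
so $R$ is strictly decreasing under the shift $(u,v)\mapsto(u+t,v+t)$, hence $R_{k+1}(x)<R_k(x)$ and your implication follows. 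Comparing the two: your argument is self-contained and elementary (one quotient-rule computation, no integral representation), and it even exhibits the quantitative decay $R_k(x)\sim(p+1)(2x-1)/(2k)$; the paper's Laplace representation is heavier but is the template the authors exploit elsewhere for comparisons against general monotone weights, which the ratio argument does not directly provide.
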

\begin{proof}
Using the well-known identity
   \[ \frac{\Gamma(s)}{\alpha^s}=\int_0^\infty e^{-\alpha t}t^{s-1}\, {\rm d}t\, ,\]
specifying $\alpha=k-x$, then $k+x-1$ we get
   \begin{align} \label{psi'}
      \psi_k'(x) &= p\sin^{p-1}(\pi x)\Bigg[\pi\cos(\pi x)\Bigg(\frac1{(k-x)^p}+\frac1{(k+x-1)^p}\Bigg) \nonumber \\
                 &  \qquad + \sin(\pi x) \Bigg(\frac1{(k-x)^{p+1}}-\frac1{(k+x-1)^{p+1}}\Bigg)\Bigg] \\
                 &= \frac{p\sin^{p-1}(\pi x)}{\Gamma(p)}\int_0^\infty e^{-kt}t^{p-1}\Big(\pi\cos(\pi x)
                    \big(e^{xt}+e^{(1-x)t}\big)\nonumber \\
                 &  \qquad + \frac{t\sin(\pi x)}{p}\left(e^{xt}-e^{(1-x)t}\right)\Big)dt
                    =:\int_0^\infty g_k(t)\, {\rm d}t\, .\nonumber
   \end{align}
Therefore
   \[\psi_{k+1}'(x)=\int_0^\infty e^{-t}g_k(t)\, {\rm d}t.\]
Denote
   \[ \widetilde{g}_k(t) := \pi\cos(\pi x) \big(e^{xt}+e^{(1-x)t}\big)
                           + \frac{t\sin(\pi x)}{p}\left(e^{xt}-e^{(1-x)t}\right).\]
Let us show that there exists $t_0>0$ such that $\widetilde{g}_k(t)\le 0,\ 0\le t\le t_0,$ and
$\widetilde{g}_k(t)\ge 0,\ t\ge t_0.$ To do this, let us note that $\widetilde{g}_k(0) = 2\pi\cos(\pi x)< 0,\
x\in(1/2,1).$ Then rewrite $\widetilde{g}_k(t)$ in the form
   \[ \widetilde{g}_k(t)= p^{-1}t\sin(\pi x)\left(e^{xt}+e^{(1-x)t}\right)\left(\pi p\,t^{-1}\cot(\pi x)
                          +\frac{2}{1+e^{(1-2x)t}}-1\right).\]
The term $p^{-1}t\sin(\pi x)\left(e^{xt}+e^{(1-x)t}\right)$ is positive for all $x\in(1/2,1),\ t> 0.$
For fixed $x\in(1/2,1)$ the function $\pi p\,t^{-1}\cot(\pi x)$ increases from $-\infty$ to
$0$ and the function $\frac{2}{1+e^{(1-2x)t}}-1$ increases from $0$ to $1$ when $t$ runs over the positive half--axis.
Hence, there exists a unique $t_0$ such that $\widetilde{g}_k(t_0)=0$.

By the relation
   \[ g_k(t) = \frac{p\sin^{p-1}(\pi x)}{\Gamma(p)}e^{-kt}t^{p-1}\widetilde{g}_k(t)\]
we deduce that $g_k(t)\le 0,\ 0\le t\le t_0,$ and $g_k(t)\ge 0,\ t\ge t_0$.
Therefore, since $\psi_k'(x)=\int_0^\infty g_k(t)\, {\rm d}t<0$ we conclude that $\psi_{k+1}'(x)=\int_0^\infty e^{-t}g_k(t)\, {\rm d}t<0$
since $e^{-t}$ is a decreasing function.
\end{proof}
For further investigations we need the next two lemmata.
\begin{lemma}\label{l1} For each $z\in[0,1/2]$ we have
   \begin{equation} \label{sine1}
      \frac{\sin(2\pi z)}{2\pi z}\le \frac{1-z^2}{1+z^2}\,.
   \end{equation}
\end{lemma}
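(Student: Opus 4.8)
The inequality $\frac{\sin(2\pi z)}{2\pi z} \le \frac{1-z^2}{1+z^2}$ on $[0,1/2]$ is an elementary but not completely trivial estimate, and the plan is to reduce it to a statement about a single real function of one variable and then control that function on a small interval. First I would substitute $u = 2\pi z$, so that $z$ ranges over $[0,1/2]$ iff $u$ ranges over $[0,\pi]$, and the claim becomes $\frac{\sin u}{u} \le \frac{4\pi^2 - u^2}{4\pi^2 + u^2}$ for $u \in [0,\pi]$. Clearing the (positive) denominators, it suffices to prove
\[
   (4\pi^2 + u^2)\sin u \le u(4\pi^2 - u^2), \qquad u\in[0,\pi],
\]
i.e. that $F(u) := u(4\pi^2 - u^2) - (4\pi^2 + u^2)\sin u \ge 0$ on $[0,\pi]$. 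Note $F(0)=0$ and $F(\pi) = \pi(4\pi^2-\pi^2) = 3\pi^3 > 0$, so the endpoints are fine; the work is to show $F$ does not dip below zero in between.

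The natural route is differentiation. I would compute $F'(u) = 4\pi^2 - 3u^2 - 2u\sin u - (4\pi^2+u^2)\cos u$ and, since at $u=0$ one gets $F'(0) = 4\pi^2 - 4\pi^2 = 0$, pass to $F''(u) = -6u - 2\sin u - 2u\cos u - 2u\cos u + (4\pi^2+u^2)\sin u = (4\pi^2 + u^2 - 2)\sin u - 6u - 4u\cos u$. On $[0,\pi]$ we have $4\pi^2 + u^2 - 2 > 0$, but the sign of $F''$ is still not obvious near the right end where $\sin u \to 0$. Rather than push through a third derivative, a cleaner tactic is to split $[0,\pi]$ at a convenient point — say compare separately on $[0,1]$ (or $[0,\pi/2]$) where one can use the standard bounds $\sin u \le u - \tfrac{u^3}{6} + \tfrac{u^5}{120}$ together with $\sin u \le u$, and on the remaining interval where $\sin u$ is bounded away from its linear approximation and $4\pi^2 - u^2$ is still comfortably positive while $\sin u$ is small. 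On $[0,\pi/2]$, using $\sin u \le u - u^3/6 + u^5/120$ reduces the claim to a polynomial inequality in $u$ that can be checked directly; on $[\pi/2,\pi]$, using $\sin u \le 1$ and $4\pi^2 - u^2 \ge 4\pi^2 - \pi^2 = 3\pi^2$ gives $u(4\pi^2-u^2) \ge \tfrac{\pi}{2}\cdot 3\pi^2 = \tfrac{3\pi^3}{2}$ while $(4\pi^2+u^2)\sin u \le 5\pi^2$, and since $\tfrac{3\pi^3}{2} > 5\pi^2$ the inequality holds on that piece with room to spare.

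The main obstacle is the behaviour on the middle/left part of the interval, roughly $u$ near $1$ to $2$, where neither the crude bound $\sin u \le 1$ nor the Taylor bound is comfortably slack; there the argument has to be quantitative. I expect the cleanest fix is to choose the split point carefully (perhaps $u = 2$ rather than $\pi/2$) so that on $[0,2]$ the fifth-order Taylor estimate for $\sin u$ leaves the residual polynomial inequality $u(4\pi^2 - u^2) - (4\pi^2+u^2)(u - \tfrac{u^3}{6} + \tfrac{u^5}{120}) \ge 0$ true with a visible margin, and on $[2,\pi]$ the trivial bound $\sin u \le 1$ suffices since $4\pi^2 - u^2 \ge 4\pi^2 - \pi^2$ keeps the left side large. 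Once the split is fixed, both pieces become routine verifications, the first a finite polynomial check and the second a single numerical inequality, completing the proof.
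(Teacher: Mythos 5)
Your central idea---replace $\sin u$ by its degree-five alternating-series upper bound $u-\tfrac{u^3}{6}+\tfrac{u^5}{120}$ and reduce the claim to a polynomial inequality---is exactly the paper's; the differences are in execution, and one of your auxiliary steps is numerically false. On $[\pi/2,\pi]$ you bound $u(4\pi^2-u^2)\ge\tfrac{\pi}{2}\cdot 3\pi^2=\tfrac{3\pi^3}{2}$ and $(4\pi^2+u^2)\sin u\le 5\pi^2$ and then assert $\tfrac{3\pi^3}{2}>5\pi^2$ ``with room to spare''; this amounts to $3\pi/2>5$, which fails since $3\pi/2\approx 4.71$. So the $\pi/2$-split does not close as written. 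Your fallback split at $u=2$ does repair it ($u(4\pi^2-u^2)\ge 2\cdot 3\pi^2=6\pi^2>5\pi^2\ge(4\pi^2+u^2)\sin u$ there), but the polynomial check you defer on the left piece must actually be carried out, and once you do it you will find the split was never needed: expanding
\begin{equation*}
u(4\pi^2-u^2)-(4\pi^2+u^2)\Bigl(u-\tfrac{u^3}{6}+\tfrac{u^5}{120}\Bigr)
= u^3\Bigl(\tfrac{2\pi^2}{3}-2\Bigr)+u^5\Bigl(\tfrac16-\tfrac{\pi^2}{30}\Bigr)-\tfrac{u^7}{120},
\end{equation*}
dividing by $u^3$ and putting $v=u^2$ leaves a concave quadratic in $v$ with negative linear coefficient, hence decreasing on $v\ge0$; its value at $v=\pi^2$ is about $2.17>0$, so the Taylor bound alone settles all of $[0,\pi]$. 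This is precisely the paper's route: it keeps the variable $z$, reduces \eqref{sine1} to $1-\tfrac{(2\pi z)^2}{6}+\tfrac{(2\pi z)^4}{120}\le\tfrac{1-z^2}{1+z^2}$, substitutes $t=z^2$ and checks the single quadratic inequality $5\pi^2-15+(5\pi^2-\pi^4)t-\pi^4t^2\ge0$ on $[0,1/4]$, whose roots are approximately $-0.9$ and $0.39$; after the change of variable $v=4\pi^2t$ your cleared-denominator inequality is literally the same quadratic. In short: same key lemma as the paper, but you should delete the erroneous $[\pi/2,\pi]$ estimate and replace the two-piece plan by the one quadratic verification, which closes the proof completely.
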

\begin{proof}
From the series expansion for the sine function we have
   \[ \frac{\sin(2\pi z)}{2\pi z}\le 1-\frac{(2\pi z)^2}{3!}+\frac{(2\pi z)^4}{5!}\,,\]
as a consequence of $\frac{(2\pi z)^{2k}}{(2k+1)!}\ge\frac{(2\pi z)^{2k+2}}{(2k+3)!},\,k\in \mathbb N$,
which holds for all $z\in[0,1/2]$. Thus, to prove \eqref{sine1}, it is enough to show that
   \begin{equation}\label{tri1}
      1-\frac{(2\pi z)^2}{3!}+\frac{(2\pi z)^4}{5!}\le\frac{1-z^2}{1+z^2}\,.
   \end{equation}
On putting $z^2=t$, \eqref{tri1} can be rewritten as
   \[ \frac{t}{1+t} \le \frac{\pi^2t}{3}-\frac{\pi^4t^2}{15}\,,\]
or
   \[ 5\pi^2-15+(5\pi^2-\pi^4)t-\pi^4t^2\ge 0. \]
For $t\in[0,1/4]$ the last inequality holds true, because the polynomial $5\pi^2-15+(5\pi^2-\pi^4)t-\pi^4t^2$ has
two real roots at $t_1\approx -0.9,$ and $t_2\approx 0.39\,.$
\end{proof}
\begin{lemma}\label{lem2} The absciss{\ae} of extrema of $\psi_k(x)$ are $1/2$ and $1$ for all admissible $p$. Moreover,
if $p>\pi^2(k-1/2)^2-1$, $\psi_k(x)$ attains an extremum at some $x_k^*\in(1/2,A_k),\, A_k\in(1/2,1)$ being the
unique solution of the equation
   \[\cot(\pi x)=-\frac{1}{\pi(k-x)}.\]
Each $x_k^*\in(1/2,A_k)$ is the abscissa of an extremum corresponding to a unique $\widetilde{p}$ which satisfies
   \[\widetilde{p} = \frac{\ln \displaystyle \frac{ 1-\pi\cot(\pi x_k^*)(k+x_k^*-1)}
                          {\displaystyle \pi\cot(\pi x_k^*)(k+x_k^*-1)-1+\frac{2k-1}{k-x_k^*}}}
                          {\displaystyle \ln(k+x_k^*-1) -\ln(k-x_k^*)}>1\,.\]
\end{lemma}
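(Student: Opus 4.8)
The plan is to work directly from \eqref{psi'} for $\psi_k'$ on $[1/2,1]$, abbreviating $u=k-x$, $v=k+x-1$ (so $u,v>0$ there because $k\ge N+1\ge 2$, and $v-u=2x-1$). First I would record the two ``always present'' abscissae: $\psi_k$ is symmetric about $x=\tfrac12$ since $\psi_k(1-x)=\psi_k(x)$, whence $\psi_k'(\tfrac12)=0$; and $\psi_k(1)=0$ is the global minimum of the nonnegative $\psi_k$, while $\sin^{p-1}(\pi)=0$ gives $\psi_k'(1)=0$ — both hold for every admissible $p>1$. For interior critical points I would regroup the bracket in \eqref{psi'} as $u^{-p-1}\bigl(\pi u\cos(\pi x)+\sin(\pi x)\bigr)+v^{-p-1}\bigl(\pi v\cos(\pi x)-\sin(\pi x)\bigr)$, set it equal to zero and divide by $\sin(\pi x)\neq 0$, so that $\psi_k'(x)=0$ becomes
$$\Bigl(\frac{k+x-1}{k-x}\Bigr)^{p+1}=\frac{1-\pi(k+x-1)\cot(\pi x)}{1+\pi(k-x)\cot(\pi x)}=:\phi(x).$$

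Next I would study $\phi$ on $(1/2,1)$. Its numerator exceeds $1$ there because $\cot(\pi x)<0$; its denominator equals $\pi(k-x)h(x)$ with $h(x):=\cot(\pi x)+\tfrac1{\pi(k-x)}$, and $h$ is strictly decreasing on $(1/2,1)$ — indeed $h'(x)=-\pi\csc^2(\pi x)+\tfrac1{\pi(k-x)^2}\le -\pi+\tfrac1\pi<0$ for $k\ge 2$ — with $h(\tfrac12)>0$ and $h(x)\to-\infty$ as $x\uparrow 1$; this yields the unique $A_k\in(1/2,1)$ with $\cot(\pi A_k)=-\tfrac1{\pi(k-A_k)}$. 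Hence $\phi>1$ on $(1/2,A_k)$, $\phi(x)\to+\infty$ as $x\uparrow A_k$, and $\phi<0$ on $(A_k,1)$; since the left side of the displayed identity is $>1$, \emph{every} interior critical point of $\psi_k$ must lie in $(1/2,A_k)$. For existence when $p>\pi^2(k-\tfrac12)^2-1$ I would compute, via the l'H\^ospital device already used in \eqref{sigh} (now for a single summand), that $\psi_k''(\tfrac12)=\dfrac{2p}{(k-1/2)^{p+2}}\bigl[(p+1)-\pi^2(k-\tfrac12)^2\bigr]>0$; thus $x=\tfrac12$ is a strict local minimum, so the maximum of $\psi_k$ over $[1/2,1]$ — which is $\ge\psi_k(\tfrac12)>0=\psi_k(1)$ — is attained neither at $\tfrac12$ nor at $1$, hence at an interior critical point $x_k^*$, which by the preceding paragraph lies in $(1/2,A_k)$.

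For the converse, fix any $x_k^*\in(1/2,A_k)$. Reading the displayed critical-point identity as an equation in $p$, the left side $p\mapsto\bigl(\tfrac{k+x_k^*-1}{k-x_k^*}\bigr)^{p+1}$ is strictly increasing (base $>1$) while the right side is the fixed number $\phi(x_k^*)>1$, so there is a unique root $\widetilde p=-1+\dfrac{\ln\phi(x_k^*)}{\ln(k+x_k^*-1)-\ln(k-x_k^*)}$, and a short rearrangement using $2k-1=u+v$ — whereby $\pi v\cot(\pi x_k^*)-1+\tfrac{2k-1}{u}=\tfrac vu\bigl(1+\pi u\cot(\pi x_k^*)\bigr)$ — recasts this precisely in the form displayed in the statement. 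Finally, $\widetilde p>1$ is equivalent to $\phi(x_k^*)>(v/u)^2$; clearing denominators and cancelling the positive factor $u+v$ this becomes, with $z:=x_k^*-\tfrac12\in(0,\tfrac12)$, the inequality $\tan(\pi z)>\dfrac{2z}{\pi\bigl((k-1/2)^2-z^2\bigr)}$, which follows from the elementary bound $\tan(\pi z)>\pi z$ together with $\pi^2\bigl((k-\tfrac12)^2-z^2\bigr)>2$ for $k\ge 2$.

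The step I expect to be the real obstacle is the bookkeeping that moves cleanly among the three equivalent presentations of the critical-point condition — the raw form coming from \eqref{psi'}, the factored form with $\phi(x)$, and the logarithmic ``solved-for-$p$'' form of the statement — and, closely tied to it, the verification that $\phi$ changes sign on $(1/2,1)$ only through the pole at $A_k$, which is exactly what confines every interior extremum to $(1/2,A_k)$.
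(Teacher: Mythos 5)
Your proposal is correct, and it reaches the statement by a genuinely different route than the paper. You reduce $\psi_k'(x)=0$ to $\bigl(\tfrac{k+x-1}{k-x}\bigr)^{p+1}=\phi(x)$ and then argue (i) existence for $p>\pi^2(k-1/2)^2-1$ via the termwise second-derivative computation $\psi_k''(1/2)=\tfrac{2p}{(k-1/2)^{p+2}}\bigl[(p+1)-\pi^2(k-1/2)^2\bigr]>0$ together with the extreme-value theorem on $[1/2,1]$, and (ii) confinement of \emph{every} interior critical point to $(1/2,A_k)$ through the sign of the denominator $1+\pi(k-x)\cot(\pi x)$; the paper instead substitutes $z=x-1/2$, $a=k-1/2$ and compares the two sides $L(z)$, $R(z)$ of \eqref{eq1}, proving $L$ decreasing via the power inequality \eqref{pow} and $R$ increasing via the sine bound of Lemma~\ref{l1}, so that the limits $L(0^+)=p+1$, $R(0^+)=\pi^2a^2$, $R(1/2^-)=\infty$ give existence \emph{and uniqueness} of the interior critical point precisely when $p+1>\pi^2(k-1/2)^2$. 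Your verification of $\widetilde p>1$ is also different and cleaner: you rewrite it as $\phi(x_k^*)>\bigl(\tfrac{k+x_k^*-1}{k-x_k^*}\bigr)^2$, cancel the factor $u+v$, and land on $\tan(\pi z)>\tfrac{2z}{\pi((k-1/2)^2-z^2)}$, settled by $\tan(\pi z)>\pi z$ and $k\ge2$, whereas the paper compares derivatives of the two sides of \eqref{poz} and invokes a separate trigonometric estimate; your algebraic identity relating the two forms of the critical-point equation (exponent $p+1$ versus the paper's exponent $p$ in \eqref{p}) checks out. What each approach buys: yours is shorter and dispenses with Lemma~\ref{l1} and \eqref{pow} altogether; the paper's $L$--$R$ monotonicity analysis is heavier but yields strictly more, namely that \emph{no} interior critical point exists when $p\le\pi^2(k-1/2)^2-1$ and that $x_k^*$ is unique for fixed $p$ --- facts not asserted in the lemma's statement but tacitly used in the proof of Corollary~\ref{th5}, so if your argument were to replace the paper's you would need to supplement it (e.g.\ by your own monotonicity observation on $\phi$ and on $p\mapsto(v/u)^{p+1}$) to keep that corollary intact.
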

\begin{proof}
We immediately obtain by \eqref{psi'} that $\psi_k'(x)$ vanishes at $1/2$ and $1$. Another possible extremum
could arise from solutions of the equation
   \begin{equation} \label{pp}
      \pi\cot(\pi x)\left(\frac{1}{(k-x)^p}+\frac{1}{(k+x-1)^p}\right)=
                             \frac{1}{(k+x-1)^{p+1}}-\frac{1}{(k-x)^{p+1}}
   \end{equation}
where $x\in(1/2,1)$. Let us show that \eqref{pp} has one only solution. Simple transformations of (\ref{pp}) give
   \begin{equation} \label{eq}
      \frac{(k+x-1)^{p+1}-(k-x)^{p+1}}{((k+x-1)^{p}+(k-x)^{p})(x-1/2)} =
                              \frac{-\pi\cot(\pi x)(k+x-1)(k-x)}{(x-1/2)}\,.
   \end{equation}
On putting $z=x-1/2\in (0,1/2),$ and $a=k-1/2>0$, (\ref{eq}) becomes
   \begin{equation} \label{eq1}
      L(z) := \frac{(a+z)^{p+1}-(a-z)^{p+1}}{z((a+z)^{p}+(a-z)^{p})}=\frac{\pi\tan(z\pi)(a^2-z^2)}{z} =: R(z)\,.
   \end{equation}
We now differentiate the left-hand member of \eqref{eq1} to obtain
   \begin{equation} \label{deri}
      L'(z) = -\,\frac{\displaystyle \left(\frac{a+z}{a-z}\right)^{2p}-1
             - p\big(\left(\frac{a+z}{a-z}\right)^2-1\big)\left(\frac{a+z}{a-z}\right)^{p-1}}
               {\displaystyle z^2(\left(\frac{a+z}{a-z}\right)^{p}+1)^2}\,.
   \end{equation}

Now, it is not hard to show that for each $t\ge 1,\ p\ge 1$ we have
   \begin{equation} \label{pow}
      t^{2p}-1\ge p(t^2-1)t^{p-1}.
   \end{equation}
On taking $t=\frac{a+z}{a-z}$ in \eqref{deri}, inequality \ref{pow} shows that $L(z)$ decreases in $z.$

We now differentiate the right-hand member of \eqref{eq1} to obtain
   \begin{align} \label{der2}
      R'(z) &= \frac{\pi(a^2+z^2)}{z\cos^2(\pi z)} \left(1-\frac{2z^2}{a^2+z^2}-\frac{\sin(2\pi z)}{2\pi z}\right)
               \nonumber \\
            &\ge \frac{\pi(a^2+z^2)}{z\cos^2(\pi z)}\left(1-\frac{2z^2}{1+z^2}-\frac{\sin(2\pi z)}{2\pi z}\right)\,.
   \end{align}
It follows from (\ref{der2}) and Lemma~\ref{l1} that $R(z)$ increases in $z$.

Letting $z\to 1/2$ we get
   \[ \lim_{z \to 1/2} L(z) = 2\frac{(a+1/2)^{p+1}-(a-1/2)^{p+1}}{(a+1/2)^{p}+(a-1/2)^{p}}, \qquad
      \lim_{z \to 1/2} R(z) = \infty\, ;\]
and similarly
   \[ \lim_{z \to 0} L(z) = p+1, \qquad \lim_{z \to 0} R(z) = (a\pi)^2\, .\]
By the aforementioned properties of $L(z)$ and $R(z)$ we find that there are only two possibilities: {\bf (i)}
\eqref{eq1} has no solutions, or {\bf (ii)} it has a unique solution on the interval $(0,1/2)$. Moreover,
\eqref{eq1} has a solution on the interval $z\in (0,1/2)$ if and only if $p+1 >\pi^2a^2=\pi^2(k-1/2)^2.$

Further simple transformations of (\ref{pp}) give
   \begin{equation} \label{p}
      \left(1+\frac{2x-1}{k-x}\right)^p=\frac{1-\pi\cot(\pi x)(k+x-1)}{\pi\cot(\pi x)(k+x-1)-1+\frac{2k-1}{k-x}}.
   \end{equation}
The expression on the left side of (\ref{p}) increases unboundedly with growing $p$. Therefore, $x_k^*\in(1/2,1)$
might be an extremal point for one $p$ only. From \eqref{p} we easily calculate this value
   \begin{equation} \label{tilda}
      \widetilde{p} = \frac{\ln \frac{ \displaystyle 1-\pi\cot(\pi x_k^*)(k+x_k^*-1)}
                           { \displaystyle \pi\cot(\pi x_k^*)(k+x_k^*-1)-1
                      + \frac{2k-1}{k-x_k^*}}}{\ln\left(1+ \displaystyle \frac{2x_k^*-1}{k-x_k^*}\right)}\,.
   \end{equation}
Let us show that $x_k^*$ cannot be greater than $A_k.$ The denominator in (\ref{tilda}) is defined on the whole of
$(1/2,1)$. To investigate the numerator we study the equation
   \[ \pi \cot(\pi x)(k+x-1)-1+\frac{2k-1}{k-x} = 0,\]
that is
   \begin{equation} \label{A}
      \cot(\pi x)=-\frac{1}{\pi (k-x)}.
   \end{equation}
Because of
   \[0=\cot(\pi/2)>-\frac{1}{\pi (k-1/2)},\quad -\infty=\lim_{x\to 1}\cot(\pi x)<-\frac{1}{\pi (k-1)}\, ,\]
and
   \[\cot'(\pi x)< \left(-\frac{1}{\pi (k-x)}\right)'\qquad \big( x\in(1/2,1)\big)\]
it is obvious that the equation \eqref{A} has an unique solution in $(1/2,1)$ which we denote by $A_k$. Of
course, $x_k^*\in (1/2,A_k)$, because $\pi\cot(\pi x )(k+x -1)-1+\frac{2k-1}{k-x }<0$ for $x>A_k$.

Let us show that for arbitrary $x\in(1/2,A_k)$ we have
   \begin{equation} \label{A'}
      \ln\left(-1+\frac{\frac{2k-1}{k-x}}{\pi\cot(\pi x)(k+x-1)-1+\frac{2k-1}{k-x}}\right)>0
   \end{equation}
and the value $\widetilde{p}$, given above by \eqref{tilda}, is greater than 1.

Because of \eqref{tilda}, \eqref{A'} becomes equivalent to
   \[ -1 + \frac{\frac{2k-1}{k-x }}{\pi\cot(\pi x )(k+x -1)-1+\frac{2k-1}{k-x }}>1+\frac{2x -1}{k-x },\]
that is, to
   \begin{equation} \label{poz}
      2-\frac{2k-1}{k-x }> \pi\cot(\pi x )(k+x -1)\, .
   \end{equation}
For $x=1/2$, \eqref{poz} becomes an identity. Note that for $x\in(1/2,A_k)$ the derivatives of both sides of
\eqref{poz} satisfy the inequality
   \[\left(2-\frac{2k-1}{k-x}\right)'=-\frac{2k-1}{(k-x)^2}>\pi\left(\cot(\pi x)-\frac{\pi(k+x-1)}
        {\sin^2(\pi x)}\right)=\pi \left(\frac{k+x-1}{\tan(\pi x )}\right)' \]
which is satisfied since
   \[\sin^2(\pi x)<\frac{\pi^2}{2}\cdot \frac{k+x-1}{k-1/2}\cdot(k-x)^2 \qquad \big(x\in(1/2,1),\ k\ge 2\big)\, . \]
Therefore, $\widetilde{p}$ is correctly defined.
\end{proof}
\begin{corollary}\label{th5} For $p \le \pi^2(k-1/2)^2-1$ the function $\psi_k(x)$ decreases on $[1/2,1]$.
\end{corollary}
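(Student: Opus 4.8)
The plan is to read off the conclusion from the monotonicity analysis already carried out inside the proof of Lemma~\ref{lem2}, together with the values of $\psi_k$ at the endpoints $x=1/2$ and $x=1$. First I would recall that the interior critical points of $\psi_k$ on $(1/2,1)$ are precisely the solutions of equation \eqref{pp}, equivalently — after the substitutions $z=x-1/2$, $a=k-1/2$ — of equation \eqref{eq1}, that is $L(z)=R(z)$ on $(0,1/2)$. The proof of Lemma~\ref{lem2} establishes that $L$ is strictly decreasing and $R$ strictly increasing on $(0,1/2)$, with $L(0^+)=p+1$, $R(0^+)=(a\pi)^2=\pi^2(k-1/2)^2$ and $R(z)\to\infty$ as $z\to 1/2^-$; hence \eqref{eq1} has a (necessarily unique) root in $(0,1/2)$ if and only if $p+1>\pi^2(k-1/2)^2$. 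Under the hypothesis $p\le \pi^2(k-1/2)^2-1$ this condition fails, so $\psi_k'$ has no zero in the open interval $(1/2,1)$.

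Next I would note that for $k\ge 2$ we have $k-x>0$ and $k+x-1>0$ for every $x\in(1/2,1)$, while $\sin(\pi x)\in(0,1]$ there, so the expression for $\psi_k'$ in \eqref{psi'} shows $\psi_k'$ is continuous on $(1/2,1)$. A continuous function with no zeros on an interval keeps a constant sign by the intermediate value theorem, so $\psi_k$ is strictly monotone on $[1/2,1]$. Finally I would evaluate the endpoints: $\psi_k(1/2)=\sin^p(\pi/2)\big((k-1/2)^{-p}+(k-1/2)^{-p}\big)=2(k-1/2)^{-p}>0$, whereas $\psi_k(1)=\sin^p(\pi)\,(\cdots)=0$. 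A function that is strictly monotone on $[1/2,1]$ with $\psi_k(1/2)>\psi_k(1)$ must be decreasing, which is exactly the assertion of the corollary.

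I do not expect a genuine obstacle here: the whole argument is a bookkeeping exercise built on the "if and only if'' characterization of interior critical points extracted from the proof of Lemma~\ref{lem2}, and the only point requiring a little care is citing that characterization in the correct direction (absence, rather than presence, of an interior extremum when $p+1\le\pi^2(k-1/2)^2$). As an alternative to the endpoint comparison, one could instead inspect the sign of $\psi_k$ just to the right of $x=1/2$ via a single‑term analogue of the second‑derivative computation \eqref{sigh}–\eqref{R13}, but the endpoint argument is shorter and avoids any further calculation.
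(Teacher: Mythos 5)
Your proposal is correct and is essentially the paper's own argument: the paper's proof is a one-liner deducing the corollary from Lemma~\ref{lem2} (no interior extremum on $(1/2,1)$ when $p+1\le\pi^2(k-1/2)^2$, which is exactly the "if and only if" characterization via $L$ and $R$ that you unpack) together with the positivity of $\psi_k(1/2)$ versus the vanishing endpoint value (the paper writes $\psi_k(0)=0$, equivalent by symmetry to your $\psi_k(1)=0$). Your version merely spells out the monotonicity/IVT bookkeeping in more detail, which is fine.
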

\begin{proof} The statement is a consequence of Lemma~\ref{lem2} and the fact $\psi_k(1/2)>\psi_k(0)=0$.
\end{proof}
\begin{theorem}\label{th2} For $p\le \pi^2(N+1/2)^2-1$ we have
   \begin{equation} \label{XX}
      \sum_{n \in \mathbb Z\setminus \mathfrak J_x} |{\rm sinc}(x-n)|^p \le
                             2\left(\frac{2}{\pi}\right)^p\lambda(p;N)\,.
   \end{equation}
The inequality is sharp and becomes an equality at $x=1/2$.
\end{theorem}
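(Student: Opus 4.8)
\textit{Proof proposal.} The plan is to reduce the bound \eqref{XX} to the evaluation of its left-hand side at the single point $x=1/2$. As recorded right after \eqref{sum}, the function $\mathfrak h_{p,N}(x)=\sum_{n\not\in\mathfrak J_x}|{\rm sinc}(x-n)|^p$ is $1$-periodic and symmetric about $x=1/2$, so its supremum over $\mathbb R$ equals its supremum over $[1/2,1]$, and on that interval it is represented by the series $\sum_{k=N+1}^\infty\psi_k(x)$ of nonnegative terms. Hence it suffices to show $\mathfrak h_{p,N}(x)\le\mathfrak h_{p,N}(1/2)$ on $[1/2,1]$ and then to compute the right side.

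The key observation is that the single hypothesis $p\le\pi^2(N+1/2)^2-1$ propagates to every term: since $k-1/2\ge N+1/2$ for all $k\ge N+1$, we get $p\le\pi^2(k-1/2)^2-1$ simultaneously for all such $k$, so Corollary~\ref{th5} applies to each $\psi_k$ and shows that every $\psi_k$ is nonincreasing on $[1/2,1]$. Summing these termwise inequalities over $k\ge N+1$ — legitimate because the terms are positive and, for $p>1$, the series converges — yields $\mathfrak h_{p,N}(x)=\sum_{k=N+1}^\infty\psi_k(x)\le\sum_{k=N+1}^\infty\psi_k(1/2)=\mathfrak h_{p,N}(1/2)$ for all $x\in[1/2,1]$, hence for all $x\in\mathbb R$.

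It then remains to identify $\mathfrak h_{p,N}(1/2)$ with $2(2/\pi)^p\lambda(p;N)$. For each omitted index $n\ge N+1$, and by symmetry for each $n\le -N$, one has $|{\rm sinc}(1/2-n)|=\bigl(\pi|n-1/2|\bigr)^{-1}$, so $\mathfrak h_{p,N}(1/2)=2\pi^{-p}\sum_{k=N+1}^\infty(k-1/2)^{-p}$; writing $(k-1/2)^{-p}=2^p(2k-1)^{-p}$ and substituting $k=n+N$ in the defining series of $\lambda(\cdot;\cdot)$ turns this into $2(2/\pi)^p\lambda(p;N)$, which is exactly the claimed value and is attained at $x=1/2$, giving sharpness. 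I do not expect a genuine obstacle here: the substantive analytic content — the one-at-a-time monotonicity of the $\psi_k$ on $[1/2,1]$ — is already supplied by Lemma~\ref{lem2} and Corollary~\ref{th5}, and the only points that need care are the propagation of the hypothesis on $p$ to all $k\ge N+1$ and the routine justification of the termwise summation.
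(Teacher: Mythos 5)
Your proposal is correct and follows essentially the same route as the paper: show that each term $\psi_k$ of \eqref{sum} decreases on $[1/2,1]$, use the $1$-periodicity and symmetry about $x=1/2$, and evaluate $\mathfrak h_{p,N}(1/2)=2\left(\tfrac{2}{\pi}\right)^p\lambda(p;N)$. The only (harmless) deviation is that you apply Corollary~\ref{th5} directly to every $k\ge N+1$, using that $p\le\pi^2(N+1/2)^2-1\le\pi^2(k-1/2)^2-1$, whereas the paper's proof additionally invokes Lemma~\ref{th3} to propagate the decrease from the first omitted term to the later ones; both arguments are valid.
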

\begin{proof} The assertion is an easy consequence of Lemma~\ref{th3}, Corollary~\ref{th5} and the identity
   \[ \sum_{n \in \mathbb Z\setminus \mathfrak J_{1/2}} |{\rm sinc}(1/2-n)|^p =
            2\left(\frac{2}{\pi}\right)^p\sum_{k=N+1}^\infty\frac{1}{(2k-1)^p}=
            2\left(\frac{2}{\pi}\right)^p\lambda(p;N)\,.\]
\end{proof}
\begin{remark} \textit{The result of {\rm Theorem}~{\rm\ref{th2}} improves the lower bound {\rm \eqref{p+}} in
{\rm Theorem}~{\rm\ref{the2}}. For some positive integer $N$ the numerical values of
$\pi^2(N+1/2)^2-1$ are shown in the table (compare with the table
in {\rm Remark}~{\rm\ref{rem1}}):}
\begin{center}\begin{tabular}{|c|c|c|c|c|}
  \hline
  \rule[-2mm]{0mm}{0.8cm}$N$ & 1 & 2 & 3 & 4 \\
  \hline
  \rule[-2mm]{0mm}{0.8cm}$\pi^2(N+1/2)^2-1$ & 21.2066 & 60.6849 & 119.902 & 198.859\\
  \hline
\end{tabular}\end{center}\end{remark}
\begin{remark} \textit{The expression $\pi^2(N+1/2)^2-1$ grows quickly in $N$ and at the same time covers a
wide range of small $p$ values. Therefore, the estimate \eqref{XX} is important for applications.}
\end{remark}

\section{\bf Upper bounds without known signal decay rate}
Here, and in what follows let $(p,q)$ be a positive H\"older pair, i.e.
$p^{-1}+q^{-1} = 1,$ $p >1.$

The most frequently appearing  rearrangement of \eqref{3} in the literature is of the form
   \begin{equation} \label{4}
      \| T_{\mathfrak J}(f;\mathbf{x})\|_\infty \le
         \Big( \sum_{\mathbf n \in \mathbb Z^d\setminus \mathfrak J}|S(\mathbf x,t_{\mathbf n})|^p\Big)^{1/p}
         \Big( \sum_{\mathbf n \in \mathbb Z^d\setminus \mathfrak J}|f(t_{\mathbf n})|^q\Big)^{1/q}
         =: A_p\,B_q.
   \end{equation}
To make the approximant $Y_{\mathfrak J}(f;\mathbf{x})$ more precise it is of interest to assume
$\mathfrak J = \mathfrak J_\mathbf x$, i.e. that the index set of sampling restoration $\mathfrak
J_{\mathbf x}$ depends on the location of $\mathbf x$ with respect to the behaviour of $f$ when estimating $B_q$.
That means $T_{\mathfrak J_\mathbf x}(f;\mathbf{x})$ is {\em a fortiori} time shifted and
possesses time adapted sampling size. Thus our approach is closer to an interpolation rather
than to extrapolation procedure. The earlier works \cite{FLS, OP1, Yen} did not mention these facts.

Using the classical approach one operates with the straightforward estimate $B_q \le C_{f,\mathfrak J}\,\|f\|_q$
where $C_{f, \mathfrak J}$ is a suitable absolute constant. Therefore \eqref{4} becomes
   \[ \| T_{\mathfrak J_\mathbf x}(f;\mathbf{x})\|_\infty \le A_p\,C_{f,\mathfrak J_\mathbf x}\, \|f\|_q\,,\]
where $C_{f,\mathfrak J_\mathbf x}$ depends on the properties of the signal function and
vanishes when $|\mathfrak J_{\mathbf x}| \to \infty$.

To obtain a class of truncation error upper bounds when the decay rate of the initial signal function
is not known we are interested in estimates for $A_p$ which vanish with $|\mathfrak J_{\mathbf x}| \to \infty,$
and estimates like $C_{f,\mathfrak J_\mathbf x}\le C_{f}$ which are valid for all $\mathbf x.$

We will consider the case $\mathfrak T=\mathbb Z^d.$ Let us specify the sampling function
   \[ S(\mathbf x, t_{\mathbf n}) = \prod_{j=1}^d {\rm sinc}(x_j-n_j)\,,\]
where ${\mathbf n}=(n_1,...,n_d)\in \mathbb{Z}^d\,.$
Then (\ref{1}) becomes the equally sampled (regular) uniformly
convergent WKS sampling formula
   \begin{equation} \label{9}
      f(\mathbf{x}) = \sum_{\mathbf{n}\in \mathbb{Z}^d}f(\mathbf{n}) \prod_{j=1}^d {\rm sinc}(x_j-n_j)
                      \qquad \big(\mathbf x \in \mathbb R^d\big)\,.
   \end{equation}
Let $\mathbf N:= (N_1, \cdots, N_d)\in \mathbb N^d,\,\mathfrak J_{\mathbf x} := \{ \mathbf n:\,
\bigwedge_{j = 1}^d(|x_j - n_j|\le N_j)\}$. Then the WKS sum truncated to $\mathfrak J_{\mathbf x}$ reads
   \[ Y_{\mathfrak J_{\mathbf x}}(f;\mathbf{x}) = \sum_{\mathbf n \in \mathfrak J_{\mathbf x}}
                     f(\mathbf{n})\prod_{j=1}^d {\rm sinc}(x_j-n_j)\]
and one introduces the consequent truncation error by
   \[ \|T_{\mathbf{N},d}(f;\mathbf x)\|_\infty :=\|f(\mathbf x)-Y_{\mathfrak J_{\mathbf x}}(f;\mathbf{x})\|_\infty\,.\]
In the sequel let us define $\widetilde{N}:=\min_{j=1,\cdots,d}N_j$ and
$\boldsymbol \pi=(\pi,\ldots,\pi)_{1\times d}.$
  \begin{theorem}\label{th4} Let $f\in B_{\boldsymbol \pi,d}^q,\ q\ge 1+\big(\pi^2(\widetilde{N}+1/2)^2-2\big)^{-1}.$
Then we have
   \begin{equation} \label{12}
      \|T_{\mathbf{N},d}(f;\mathbf x)\|_\infty
             \le C(\mathbf{N},d,q)
              \cdot \|f\|_q\, ,
   \end{equation}
where
   \[C(\mathbf{N},d,q):= \frac{2^{2-\frac{1}{q}}}{\pi}\,\mathfrak C_{\frac{q}{q-1}}^{(d-1)(1-\frac{1}{q})}\,
                         \mathfrak B_{d,q}^{1/q}\, \left(\sum_{j=1}^d
                         \lambda\left(\frac{q}{q-1};N_j\right)\right)^{1-\frac{1}{q}}\, ;\]
and the constants $\mathfrak C_\alpha, \mathfrak B_{d, \beta}$ are introduced in {\rm Theorem 2} and by
{\rm \eqref{PP-dconst}} respectively.
\end{theorem}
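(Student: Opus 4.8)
The plan is to run the classical H\"older splitting \eqref{4} with the conjugate exponent $p=q/(q-1)>1$ and to control the two resulting factors $A_p$ and $B_q$ by the tools assembled above: $B_q$ will be handled by the multidimensional Plancherel--P\'olya inequality, producing the $\mathbf x$-free constant $C_f$, while $A_p$ will be reduced to one-dimensional sinc sums governed by Theorems~\ref{the0} and \ref{th2}. First I would use the WKS expansion \eqref{9} to write the remainder as the tail series $f(\mathbf x)-Y_{\mathfrak J_{\mathbf x}}(f;\mathbf x)=\sum_{\mathbf n\notin\mathfrak J_{\mathbf x}}f(\mathbf n)\prod_{j=1}^d{\rm sinc}(x_j-n_j)$, and then apply H\"older's inequality exactly as in \eqref{4} to get $\|T_{\mathbf N,d}(f;\mathbf x)\|_\infty\le A_p B_q$.

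For $B_q$ the argument is immediate: since $f\in B_{\boldsymbol\pi,d}^q$ and $\mathbb Z^d$ is a separated sequence with $\delta_\ell=1$ for every $\ell$, Theorem~\ref{th1} applied with $\sigma_\ell=\pi$, $\delta_\ell=1$ yields $\sum_{\mathbf n\in\mathbb Z^d}|f(\mathbf n)|^q\le\mathfrak B_{d,q}\|f\|_q^q$, whence
\[
B_q=\Big(\sum_{\mathbf n\notin\mathfrak J_{\mathbf x}}|f(\mathbf n)|^q\Big)^{1/q}\le\Big(\sum_{\mathbf n\in\mathbb Z^d}|f(\mathbf n)|^q\Big)^{1/q}\le\mathfrak B_{d,q}^{1/q}\|f\|_q,
\]
a bound that does not depend on $\mathbf x$; this is precisely the role played by $C_{f,\mathfrak J_{\mathbf x}}\le C_f$ in the discussion preceding the theorem.

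The heart of the proof is the estimate for $A_p$. Here I would use that the complement of $\mathfrak J_{\mathbf x}=\{\mathbf n:\bigwedge_{j=1}^d|x_j-n_j|\le N_j\}$ is the union $\bigcup_{j=1}^d\{\mathbf n:|x_j-n_j|>N_j\}$, so that by the union bound and by separating the $d$ independent indices,
\[
A_p^{\,p}=\sum_{\mathbf n\notin\mathfrak J_{\mathbf x}}\prod_{j=1}^d|{\rm sinc}(x_j-n_j)|^p\le\sum_{j=1}^d\Bigg(\sum_{|x_j-n_j|>N_j}|{\rm sinc}(x_j-n_j)|^p\Bigg)\prod_{i\ne j}\Bigg(\sum_{n_i\in\mathbb Z}|{\rm sinc}(x_i-n_i)|^p\Bigg).
\]
Each full inner sum is at most $\mathfrak C_p$ by Theorem~\ref{the0}, contributing the factor $\mathfrak C_p^{\,d-1}$. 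For the tail sum I would invoke Theorem~\ref{th2}, which needs $p\le\pi^2(N_j+1/2)^2-1$; since $N_j\ge\widetilde N$ it is enough to ask $p\le\pi^2(\widetilde N+1/2)^2-1$, and writing $p=1+1/(q-1)$ one checks at once that this is equivalent to the hypothesis $q\ge1+\big(\pi^2(\widetilde N+1/2)^2-2\big)^{-1}$. Theorem~\ref{th2} then bounds each tail sum by $2(2/\pi)^p\lambda(p;N_j)$, giving
\[
A_p^{\,p}\le2\Big(\frac2\pi\Big)^p\mathfrak C_p^{\,d-1}\sum_{j=1}^d\lambda(p;N_j),
\]
and, raising to the power $1/p=1-1/q$,
\[
A_p\le\frac{2^{2-1/q}}{\pi}\,\mathfrak C_{q/(q-1)}^{(d-1)(1-1/q)}\Big(\sum_{j=1}^d\lambda\big(\tfrac{q}{q-1};N_j\big)\Big)^{1-1/q}.
\]

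Multiplying the bounds for $A_p$ and $B_q$ produces \eqref{12} with exactly the stated constant $C(\mathbf N,d,q)$. The only genuinely delicate step is the set identity $\mathbb Z^d\setminus\mathfrak J_{\mathbf x}=\bigcup_{j}\{|x_j-n_j|>N_j\}$ together with the product splitting of $A_p^{\,p}$, and the short verification that the assumption on $q$ is exactly the admissibility range for Theorem~\ref{th2} with $\widetilde N$ in place of a fixed $N$; everything else is routine collecting of constants.
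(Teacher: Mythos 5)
Your proposal is correct and follows essentially the same route as the paper: H\"older's inequality with $p=q/(q-1)$, the multidimensional Plancherel--P\'olya bound $\mathfrak B_{d,q}^{1/q}\|f\|_q$ for the sample sum, the coordinatewise splitting of the tail of the sinc product (identical to the paper's estimate \eqref{pq}), and Theorems~\ref{the0} and \ref{th2} with the same verification that $p\le\pi^2(\widetilde N+1/2)^2-1$ matches the stated hypothesis on $q$. The only item the paper makes explicit that you take for granted is the validity of the expansion \eqref{9} for $f\in B_{\boldsymbol\pi,d}^q$ (justified there via Lemma~\ref{nik_i} and the argument of Plancherel--P\'olya), but this does not affect the substance of your argument.
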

\begin{proof} Now $f \in B_{\boldsymbol \pi,d}^q\,,$ hence, by Lemma~\ref{nik_i},
$f(x_1,...,x_{k-1},\cdot,x_{k+1},...,x_d) \in B_{\pi,1}^q$ for all $k=1,\cdots,d$. So, the direct $d$--dimensional
generalization of the proof given for the two--dimensional case in \cite[\S 48]{PPII} shows that \eqref{9} is valid,
that is
   \[ \|T_{\mathbf{N},d}(f;\mathbf x)\|_\infty = \Bigg\|\sum_{\mathbf{n}\in \mathbb{Z}^d\setminus \mathfrak
                   J_{\mathbf x}}f(\mathbf{n}) \prod_{j=1}^d {\rm sinc}(x_j-n_j)\Bigg\|_\infty\,.\]
Hence, it is not hard to see that for a H\"older pair $(p,q)$ we have
   \begin{equation} \label{13}
      \| T_{\mathbf{N},d}(f;\mathbf x)\|_\infty
                      \le \Bigg(\sum_{\mathbf n \in \mathbb Z^d \setminus \mathfrak J_{\mathbf x} }
                          \prod_{j=1}^d \bigl| {\rm sinc}(x_j-n_j)\bigr|^p\Bigg)^{1/p}\,
                          \Bigg(\sum_{\mathbf n \in \mathbb Z^d \setminus \mathfrak J_{\mathbf x} }|f(\mathbf n)|^q
                          \Bigg)^{1/q}\,.
   \end{equation}
By Theorem~\ref{th1} the multidimensional Plancherel--P\'olya inequality \eqref{PP-d} holds for functions $f$
belonging to the Bernstein space $B_{\boldsymbol \pi,d}^q,\, q \ge 1$ and since in the
regular sampling case one has $\delta_\ell=1, \ell = 1,\cdots,d$, that is $\mathfrak T \equiv \mathbb Z^d$. It
follows that the second term in \eqref{13} is bounded by $\mathfrak B^{1/q}_{d,q} \| f\|_q$. Indeed, we easily deduce
   \[ \sum_{\mathbf n \in \mathbb Z^d \setminus \mathfrak J_{\mathbf x} }|f(\mathbf n)|^q \le
      \sum_{\mathbf n \in \mathbb Z^d}|f({\mathbf n})|^q \le \mathfrak B_{d,q} \, \| f\|_q^q\, .\]
Now, we use the estimate
   \begin{align} \label{pq}
      \sum_{\mathbf n\in\mathbb Z^d\setminus\mathfrak J_{\mathbf x}} &\prod_{j=1}^d \bigl| {\rm sinc}(x_j-n_j)\bigr|^p\le
      \sum_{j=1}^d\sum_{n_j \in \mathbb Z \setminus \mathfrak J_{x_j}}\bigl|{\rm sinc}(x_j-n_j)\bigr|^p
      \prod_{k\not=j}\sum_{n_k \in \mathbb Z} \bigl| {\rm sinc}(x_k-n_k)\bigr|^p\nonumber \\
 &\le \left( \sup_{x\in[1/2,1]}\ \sum_{n\in\mathbb Z} \bigl| {\rm sinc}(x-n)\bigr|^p\right)^{d-1}
      \sum_{j=1}^d\sum_{n_j \in \mathbb Z \setminus \mathfrak J_{ x_j} } \bigl| {\rm sinc}(x_j-n_j)\bigr|^p\, .
   \end{align}
The application of \eqref{XX} to the last term in \eqref{pq} requires $p\le \pi^2(N_j+1/2)^2-1$ for all
$j=1,\cdots,d$. Rewriting all these constraints
we obtain $q\ge 1+\left(\pi^2(\widetilde{N}+1/2)^2-2\right)^{-1}.$ So, as $\sigma_\ell = \pi, \delta_\ell = 1$
we conclude
   \[ \sum_{\mathbf n \in \mathbb Z^d \setminus \mathfrak J_{\mathbf x}}
            \prod_{j=1}^d \bigl| {\rm sinc}(x_j-n_j)\bigr|^p \le 2\left(\frac{2}{\pi}\right)^p
            \sum_{j=1}^d \lambda(p;N_j)\Bigg(\sup_{x\in[1/2,1]}\ \sum_{n\in\mathbb Z} \bigl|
            {\rm sinc}(x-n)\bigr|^p\Bigg)^{d-1}.\]
By applying Theorem~\ref{the0} to this estimate, then substituting the values of all constants involved and
noticing that $p^{-1}+q^{-1} = 1, (p,q)$ being a positive H\"older pair, we complete the proof of the Theorem.
\end{proof}

\begin{remark}\textit{For the particular case of equal truncation sizes $N_1=...=N_d=\widetilde{N}$ and
$q_*:=1+\frac1{\pi^2(\widetilde{N}+1/2)^2-2 \rule{0mm}{0.33cm}}$ the
corresponding numerical values of $C_{\mathbf N}:=C(\mathbf{N},d,q)$ are given in
the table:}
\begin{center}\begin{tabular}{|c|c|c|c|c|c|c|c|c|c|}
  \hline
  $\widetilde{N}$ & \multicolumn{3}{c}{\rule[-2mm]{0mm}{0.7cm}1} &\multicolumn{3}{|c|}{2} & \multicolumn{3}{c|}{3}\\
  \hline
  \rule[-1.5mm]{0mm}{0.6cm}$q_*$ & \multicolumn{3}{|c|}{$1.0495$}  & \multicolumn{3}{|c|}{$1.0168$}  & \multicolumn{3}{|c|}{$1.0084$} \\
    \hline
      \rule[-1.5mm]{0mm}{0.6cm} $d$ & 1 & 2 & 3 & 1 & 2 & 3 & 1 & 2 & 3\\
    \hline
     \rule[-2mm]{0mm}{0.7cm}$C_{\mathbf{N}}$ & 0.6727 & 2.1328 & 6.6703 & 0.3968  & 1.2369  & 3.8368 & 0.2822 & 0.8756 & 2.7103  \\
    \hline
\end{tabular} \end{center}
\end{remark}

\begin{remark} {\it In the case $p = q = 2$ the $L^2(\mathbb R^d)$--theory shows that the universal
truncation error upper bound takes the form
   \begin{align*}
      \|T_{\mathbf{N},d}(f,\mathbf x)\|_\infty  &\le  \sqrt{1-\prod_{j=1}^d\left(1-\frac{8}{\pi^{2}}
                              \lambda(2;N_j)\right)}\,\cdot\,\|f\|_2 \\
                          &= \sqrt{1-\frac{8^d}{\pi^{2d}}\prod_{j=1}^d\sum_{n_j=1}^{N_j}
                             \frac{1}{(2n_j-1)^2}}\,\cdot\,\|f\|_2\,.
   \end{align*}
This result is valid for all $\mathbf{N}\in \mathbb N^d$ and it is mainly simpler than the related restriction of
{\rm \eqref{12}}. Moreover, this bound is sharp, see \cite[Theorem 1]{OP2}.}
\end{remark}
\begin{corollary} Let $f\in B_{\boldsymbol \pi,d}^q,\ q>1$. The convergence rate in the multidimensional WKS sampling
restoration procedure is of magnitude
   \begin{equation}\label{speed}
      \| T_{\mathbf{N},d}(f;\mathbf x)\|_\infty = \mathscr O\big(\widetilde{N}^{-1/q}\big)
                  \qquad \big(\widetilde{N}\to\infty\big)\,.
   \end{equation}
\end{corollary}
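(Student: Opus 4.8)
The plan is simply to read the convergence rate off the universal bound of Theorem~\ref{th4} and to track the dependence of the constant $C(\mathbf N,d,q)$ on $\mathbf N$. First note that the hypothesis of Theorem~\ref{th4}, namely $q\ge 1+\big(\pi^2(\widetilde N+1/2)^2-2\big)^{-1}$, is automatically satisfied for all sufficiently large $\widetilde N$: the right-hand side tends to $1$ as $\widetilde N\to\infty$, while $q>1$ is fixed. Hence for all such $\mathbf N$ we have
\[
   \|T_{\mathbf N,d}(f;\mathbf x)\|_\infty\le C(\mathbf N,d,q)\,\|f\|_q .
\]
Writing $p=q/(q-1)$, the factors $\frac{2^{2-1/q}}{\pi}$, $\mathfrak C_p^{(d-1)(1-1/q)}$, and $\mathfrak B_{d,q}^{1/q}$ appearing in $C(\mathbf N,d,q)$ depend only on $d$ and $q$ (in the regular case $\delta_\ell=1$, $\sigma_\ell=\pi$ in \eqref{PP-dconst}), so the only $\mathbf N$-dependent factor is $\big(\sum_{j=1}^d\lambda(p;N_j)\big)^{1-1/q}$; that is, $C(\mathbf N,d,q)=c(d,q)\big(\sum_{j=1}^d\lambda(p;N_j)\big)^{1-1/q}$ for a constant $c(d,q)$.

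Next I would estimate the incomplete Lambda function. Since the series defining $\lambda(p;\cdot)$ has positive terms that decrease in the second argument, and $N_j\ge\widetilde N$ for every $j$, we get $\lambda(p;N_j)\le\lambda(p;\widetilde N)$ and therefore $\sum_{j=1}^d\lambda(p;N_j)\le d\,\lambda(p;\widetilde N)$. Comparing $\lambda(p;\widetilde N)=\sum_{k\ge\widetilde N}(2k+1)^{-p}$ with $\int_{\widetilde N}^{\infty}(2t+1)^{-p}\,{\rm d}t=\frac{(2\widetilde N+1)^{1-p}}{2(p-1)}$ (legitimate because $p>1$) yields $\lambda(p;\widetilde N)=\mathscr O\big(\widetilde N^{1-p}\big)$ as $\widetilde N\to\infty$. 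Consequently $\sum_{j=1}^d\lambda(p;N_j)=\mathscr O\big(\widetilde N^{1-p}\big)$.

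Finally I would combine the two observations using the H\"older relation $1-1/q=1/p$, which gives the arithmetic identity $(1-p)(1-1/q)=(1-p)/p=1/p-1=-1/q$. Hence
\[
   \Big(\sum_{j=1}^d\lambda(p;N_j)\Big)^{1-1/q}
      =\mathscr O\big(\widetilde N^{(1-p)(1-1/q)}\big)=\mathscr O\big(\widetilde N^{-1/q}\big),
\]
and therefore $\|T_{\mathbf N,d}(f;\mathbf x)\|_\infty\le C(\mathbf N,d,q)\,\|f\|_q=\mathscr O\big(\widetilde N^{-1/q}\big)$, as claimed. There is no serious obstacle here: the whole argument is bookkeeping, and the only points requiring care are verifying that every factor of $C(\mathbf N,d,q)$ except $\big(\sum_j\lambda(p;N_j)\big)^{1-1/q}$ is independent of $\mathbf N$, and noticing that the conjugacy identity $(1-p)(1-1/q)=-1/q$ converts the tail estimate for $\lambda(p;\widetilde N)$ into precisely the exponent $-1/q$.
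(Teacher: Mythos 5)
Your argument is correct and follows essentially the same route as the paper: invoke Theorem~\ref{th4} (noting its hypothesis holds for all large $\widetilde N$ once $q>1$ is fixed), bound the only $\mathbf N$-dependent factor via the tail estimate $\lambda(p;\widetilde N)=\mathscr O(\widetilde N^{1-p})$, and convert the exponent by $(1-p)(1-1/q)=-1/q$. The only cosmetic point is that the integral comparison should be stated as $\lambda(p;\widetilde N)\le (2\widetilde N+1)^{-p}+\int_{\widetilde N}^{\infty}(2t+1)^{-p}\,{\rm d}t$ (the sum of a decreasing function exceeds the bare integral), exactly as in the paper, but this does not affect the order.
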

\begin{proof}  By the estimate
   \[ \lambda(p;N) \le \frac1{(2N+1)^p} + \int_{N+1}^\infty \frac{{\rm d}x}{(2x-1)^p} =
                     \frac1{(2N+1)^{p-1}}\left(\frac1{2N+1}+\frac{1}{2(p-1)}\right) \]
we obtain $\lambda(p;N) = \mathscr O(N^{-p+1})\,.$  Letting $\widetilde{N}\to \infty$ under the constraint $q \ge 1+
\big(\pi^2(\widetilde{N}+1/2)^2-2\big)^{-1}$ we arrive at \eqref{speed} with the aid of Theorem~\ref{th4}.
\end{proof}

\section{\bf Final remarks}
\hspace{-4mm}{\sf A.} In the majority of articles devoted to Whittaker--Kotel'nikov--Shannon sampling theorems, their
numerous generalizations, the related truncation
error analysis and the sampling approximation convergence questions in the various $L^p(\mathbb R^d)$--type spaces
(e.g. \cite{hig}, \cite{li}, \cite{LF}, and the references therein), the authors consider particular classes of
functions with prescribed decay rates. These assumption upon the initial signal give one an opportunity
to estimate the constant $B_q$ in \eqref{4} so, that the estimates vanish when the finite sampling restoration sum
size parameter $N$ runs to infinity. The constant $A_p$ is another feature of interest here. It too was introduced in
\eqref{4}, and its novelty lies in the fact that one usually estimates by an absolute constant which does
not depend on $N$.

\hspace{-4mm}{\sf B.} In this article we propose estimates for $A_p$
which depend on $N$ and tends to zero for $N$ growing. This approach
enables us to consider and to obtain approximation error estimates
for wide functional classes without strong assumptions upon the decay rate of
initial signals. It seems that proposed estimates
are the first attempts in sampling theory to analyze the properties
of the series
   \[\mathfrak h_{p,N}(x):= \sum_{n\not\in\mathfrak J_x} |{\rm sinc}(x-n)|^p\]
for $p\not=2$ as functions of $x$. The proposed procedure treats the
truncation error $T_{\mathfrak J_{\mathbf x}}(f;\mathbf x)$ in one
and higher dimensions and it results in sharper estimates in
comparison with known ones in the literature. Very interesting
relations have been found on the dependence between the extremums of
$\mathfrak h_{p,N}(x)$, its various properties and the joint
behaviour of $N$ and $p$. New related truncation error upper bounds are
derived in $\|\cdot\|_\infty$--norm for the sampling restoration
procedure. Several numerical examples illustrates the matter in the
multidimensional case too.

\hspace{-4mm}{\sf C.} Moreover, the results obtained and the numerical simulations suggest some new hypotheses
and open problems:
\begin{enumerate}
  \item To investigate the case of $p_*$ in Theorem~\ref{the2}.
  \item It seems that there are only two possible cases in the behaviour of $\mathfrak h_{p,N}(x),\,x\in [1/2,1]$.
    The first one concerns decreasing $\mathfrak h_{p,N}(x)$, its maximum is in $x=1/2$; the second one means
    increasing $\mathfrak h_{p,N}(x)$ to certain local maximum, and then it decreases,
    see the plots of $\mathfrak h_{2,2}(x), \mathfrak h_{27,2}(x)$ on Figures~1,\,2. Prove or disprove!
  \item To estimate the maximum value of $\mathfrak h_{p,N}(x)$ in the second case mentioned above.
  \item To obtain sharp estimates in Theorem~\ref{th4}. It has to be mentioned that for $p=2$ such sharp
    estimates were derived in \cite{OP2}.
  \item To apply the results obtained to the discontinuous signals, see \cite{But}.
  \item To apply the results obtained to the stochastic case, see \cite{OP}.
\end{enumerate}

\bigskip

\bigskip
\begin{center}
\textbf{Acknowledgements}
\end{center}
The authors thank Professor Emeritus J.R. Higgins for numerous valuable comments, discussion and suggestions
that have helped to improve our paper.

The recent investigation was supported in part by Research Project No.
112-2352818-2814 of Ministry of Sciences, Education and Sports of Croatia and in part by
La Trobe University Research Grant--501821 "Sampling, wavelets and optimal stochastic modelling".

\bigskip

\end{document}